\documentclass[twocolumn]{revtex4-2}
\usepackage{amsmath,amsfonts,amsthm,amssymb,enumitem} 
\theoremstyle{remark}
\begin{document}
\newtheorem{Prop}{Proposition}
\newtheorem{Lem}{Lemma}
\title{Consequences of a dynamical no-signaling condition 
for classical-quantum interactions}
\author{S. Camalet}
\affiliation{Sorbonne Universit\'e, CNRS, Laboratoire de Physique 
Th\'eorique de la Mati\`ere Condens\'ee, LPTMC, F-75005, Paris, France}
\begin{abstract}
Hybrid classical-quantum approaches are instrumental in numerous fields, 
from condensed matter physics to quantum information science. We recently 
proposed to describe hybrid systems starting from a set of natural axioms for 
measurement probabilities without adding any underlying mathematical 
structure. The so defined probability measures fulfill a no-signaling condition 
that ensures that instantaneous communication is impossible. We formulate 
here a dynamical generalization of this condition. It means that, for two 
independent systems, the outcome probabilities of a measurement made on 
one of them are not affected by a measurement performed earlier on the other. 
Analogous requirements are satisfied for usual classical and quantum bipartite 
systems and violating them would make faster-than-light signaling possible. 
The dynamical no-signaling condition has important consequences for 
classical-quantum interactions that depend on the hybrid approach used. For 
no-signaling hybrid dynamics with classical trajectories, the classical degrees of 
freedom can influence the quantum ones but the latter cannot react on the 
former. If pure states of quantum systems remain pure then the dynamical 
no-signaling condition implies the absence of classical reaction. When all 
hybrid states are allowed, there are no genuine classical-quantum interactions 
for no-signaling hybrid dynamics that do not generate correlations between the 
classical and the quantum degrees of freedom. In all these cases, the proposed 
condition is equivalent to the convex-linearity of the probability measure 
transformations describing finite-time evolutions.
\end{abstract} 
\maketitle
\section{Introduction}
Classical-quantum systems are of relevance in numerous fields. Such hybrid 
approaches are, for instance, considered to describe quantum measurements 
in the Copenhangen interpretation \cite{Bohr,dE} and to study the 
spacetime-matter interaction in the absence of a quantum theory of gravity 
\cite{BT,An,O}. They are also used in quantum chemistry and condensed 
matter physics when a full quantum treatment is computationally infeasible 
\cite{KC,CB} and in quantum information science to define important quantum 
resources such as entanglement \cite{HHHH,CG} and to design useful 
algorithms \cite{CC}. 

Very different approaches have been developed and basic requirements that 
a hybrid description should meet have been proposed to apprehend this wide 
variety of approaches \cite{BT,T1}. They are that the classical degrees of 
freedom are characterized by a positive probability density function, the 
quantum ones are characterized by a positive density operator, the usual 
classical and quantum dynamics are recovered in the absence of classical-
quantum interactions, superluminal communication is impossible and the 
quantum degrees of freedom cannot remain in a pure state when genuine 
classical-quantum interactions are present.

Hybrid dynamical approaches usually start by assuming a mathematical form 
for the systems states \cite{BCGG,T1}. Some descriptions involve usual classical 
trajectories and quantum states \cite{BT,An,CB}. In other approaches, both 
classical and quantum degrees of freedom are described within the same formal 
framework either classical or quantum \cite{E,ABCMP,A,Ge,Koo,S,BJ,PT,T2,D}. 
One or several of the above listed requirements are violated by most of these 
hybrid schemes \cite{T1,BCGG,PT,T2}. In a recent paper, we proposed a 
different starting point which is a set of natural axioms for measurement 
probabilities \cite{PRA}. These axioms ensure a positive probability measure for 
the classical degrees of freedom and a positive density operator for the quantum 
ones. Moreover, they fully determine hybrid states without assuming any 
additional underlying mathematical structure. 

Both classical and quantum joint measurement probabilities are no-signaling 
\cite{Be}. For any two systems and measurements performed on them, the 
outcome probabilities for one system do not depend on the measurement 
made on the other. No instantaneous signaling between distant systems is 
possible by performing measurements on them. The no-signaling condition can 
be stated for classical, quantum and hybrid systems as it is expressed in terms 
of the measurement probability measure of the two systems considered. As we 
will see, it follows straigthforwardly from the above mentioned axioms for hybrid 
probability measures.

In this paper, we formulate a requirement for the finite-time evolutions of hybrid 
probability measures which is weaker than that previously considered \cite{PRA} 
but yet has important consequences for classical-quantum interactions. We 
generalize the no-signaling condition taking into account independent 
time-evolutions, i.e., such that correlations between distinct systems do not 
influence the evolution of the marginal probability measure of any of them. To 
do so, we consider the following protocol. A pre-measurement is performed on 
an ancillary system, the systems then evolve independently of one another and 
finally a measurement is made on the system of interest. We assume that the 
probability of obtaining any given outcome in the last measurement does not 
depend on the chosen pre-measurement. This assumption leads to a condition 
for the probability measure transformations describing finite-time evolutions. 
A transformation violating this requirement could be used for faster-than-light 
communication. The consequences of this dynamical no-signaling condition for 
classical-quantum interactions depend on the hybrid approach used. Several 
cases are examined in this paper.

The rest of the paper is organized as follows. In Sec.~\ref{Hscrm}, we recall the 
above mentioned axioms and the ensuing hybrid states. The particular hybrid 
descriptions with classical probability density functions are also discussed. In 
Sec.~\ref{Ahs}, we formulate no-signaling conditions for the finite-time evolutions 
of a hybrid system in the presence of an ancillary system that is classical, 
quantum or hybrid. It is also shown that analogous no-signaling conditions are 
satisfied for usual classical and quantum systems. In Sec.~\ref{St}, we examine 
examples for which such dynamical no-signaling requirements are violated. In 
Sec.~\ref{Eq}, we prove, when all hybrid probability measures are allowed, the 
equivalence of the conditions introduced in Sec.~\ref{Ahs} and with the 
convex-linearity of the probability measure transformations describing finite-time 
evolutions. This result also holds for approaches with classical probability density 
functions. In Sec.~\ref{Nqb}, it is shown that, for no-signaling dynamics with 
classical trajectories, the classical degrees of freedom are not influenced by 
the quantum ones. In Sec.~\ref{Ncb}, no-signaling hybrid dynamics for which 
pure states of quantum systems remain pure are considered. In this case, we 
find that there is no classical reaction. In Sec.~\ref{Ngcqi}, we show that, for 
no-signaling dynamics that do not generate correlations between the classical 
and the quantum degrees of freedom, there are no genuine classical-quantum 
interactions. Finally, in Sec.~\ref{C}, we conclude and summarize our main 
results.
\section{Hybrid probability measures}\label{Hscrm}
We are interested in hybrid systems, denoted by $h$, consisting of classical 
and quantum degrees of freedom. The classical subsystem of $h$, denoted by 
$c_h$, is characterized by a sample space $X$, which can be any set, e.g., 
a phase space, and an event space $\cal A$ which is a $\sigma$-algebra on 
$X$. The quantum subsystem, denoted by $q_h$, is characterized by a 
separable Hilbert space ${\cal H}$ and the set $\cal E$ of positive operators 
$E$ on ${\cal H}$ such that $I-E$ is also positive, where $I$ is the identity 
operator on ${\cal H}$. The elements of $\cal E$ are termed effects and the 
weak topology is considered on this set \cite{K}. The elements of $X$ are 
denoted by $x$. 

A probability measure $w$ of hybrid system $h$ is a map from 
$\cal A \times \cal E$ to $\mathbb{R}^+$ such that 
$w(\cup_n A_n,E)=\sum_n w(A_n,E)$ for any sequence of pairwise disjoint 
events $A_n$ and any effect $E$, $w(A,\sum_n E_n)=\sum_n w(A,E_n)$ 
for any sequence of effects $E_n$ such that $\sum_n E_n$ is an effect and 
any event $A$ and $w(X,I)=1$ \cite{PRA}. The probability measure of $c_h$ 
is $p : A \mapsto w(A,I)$ and that of $q_h$ is the map $E \mapsto w(X,E)$. 
It satisfies the prerequisites of the Busch-Gleason theorem and so there is a 
density operator $\rho$ such that $w(X,E)=\mathrm{tr}(\rho E)$ \cite{G,B}. 
Note that the set of all probability measures of $h$ is convex. 

Any hybrid probability measure $w$ can be written as
\begin{equation}
w : (A,E) \mapsto \int_A \mathrm{tr}(\eta(x) E) dp(x) , \label{Gt}
\end{equation}
where $p$ is a probability measure on $\cal A$ and $\eta$ is a $p$-integrable 
map from $X$ to the set of density operators on $\cal H$, i.e., such that 
$x \mapsto \mathrm{tr}(\eta(x) M)$ is $p$-integrable for any bounded operator 
$M$ on $\cal H$ \cite{PRA}. For given $w$, $p$ is unique and $\eta$ is 
$p$-almost everywhere unique. The probability measure of $c_h$ is $p$ and 
the density operator $\rho$ of $q_h$ is equal to the Bochner integral 
$\int \eta dp$, where, as usual, the domain of integration is $X$ when it is 
omitted \cite{BI}. Any measurement probability $w(A,E)$ can be evaluated 
using Eq.\eqref{Gt}. Thus, the classical probability measure $p$ and the map 
$\eta$ can be considered, together, as the state of hybrid system $h$.

Most descriptions of hybrid systems do not take into account all the hybrid 
probability measures defined above but only some of them that fulfill additional 
requirements. For instance, it is usual to assume that classical probability 
measures are characterized by probability density functions with respect to a 
natural reference measure $\mu$, e.g., the counting measure if $X$ is discrete 
or the Lebesgue measure if $X$ is an Euclidean space \cite{T1}. Moreover, 
$\mu$ is usually $\sigma$-finite, i.e., $X$ is a countable union of events with 
finite measure. So, due to Radon-Nikodym theorem, the above requirement is 
equivalent to the vanishing of the classical probability $p(A)$ for any 
$\mu$-null event $A$ \cite{Bi}. 

Hybrid probability measures $w$ for which $w(A,I)=0$ for any $\mu$-null 
event $A$, with $\mu$ a $\sigma$-finite measure, can be written as 
$w : (A,E) \mapsto \int_A \mathrm{tr}(\omega(x) E) d\mu(x)$, where 
$\omega$ is a $\mu$-integrable map from $X$ to the set of positive 
trace-class operators on $\cal H$ such that 
$\mathrm{tr} \int  \omega d\mu=1$ \cite{PRA}. For given $w$, $\omega$ 
is $\mu$-almost everywhere unique. The probability measure $p$ of $c_h$ 
is characterized by the probability density function 
$f : x \mapsto \mathrm{tr}\omega(x)$. The state of $q_h$ is 
$\rho=\int \omega d\mu$ and the quantum state for given $x$, such that 
$f(x)>0$, is $\eta(x)=\omega(x)/f(x)$. Any measurement probability $w(A,E)$ 
can be evaluated using the above expression. Thus, the map $\omega$ can be 
considered as the state of hybrid system $h$ in this case. Such states are used 
in many hybrid approaches \cite{O,KC,T1,ABCMP,A,D,ABMCCGJG}. Note that 
the corresponding sets of hybrid probability measures and of classical probability 
measures are convex. 
\section{No-signaling conditions for finite-time evolutions}\label{Ahs}
We assume that, for distinct systems, there exist independent time evolutions 
defined as follows. Consider a hybrid system $h$ of interest, characterized by 
a set $\cal W$ of hybrid probability measures, and any ancillary system. When 
they evolve independently from each other, any finite-time evolution of $h$ is 
described by a map ${\cal T}$ from $\cal W$ to itself. In other words, possible 
initial correlations between $h$ and the ancillary do not influence how the state 
of $h$ alone evolves. We suppose, moreover, that the transformation $\cal T$ 
fulfills the following no-signaling condition. The outcome probabilities of any 
measurement performed on $h$ after a pre-measurement made on the 
ancillary and the finite-time evolution described by $\cal T$ do not depend on 
the chosen pre-measurement. The cases of an ancillary consisting in a 
quantum system, a classical system and a hybrid system are examined below.

Let $hq$ be the bipartite system made up of $h$ and any quantum system 
$q$ with Hilbert space ${\cal H}_q$. It is a hybrid system whose events are 
those of $h$ and effects are the positive operators $G$ on 
${\cal H} \otimes {\cal H}_q$ such that $I\otimes I_q-G$ is also positive, 
where $I_q$ is the identity operator on ${\cal H}_q$. Let $w$ be any 
probability measure of $hq$. The corresponding probability measure of $q$ is 
$w_q : F \mapsto w(X,I\otimes F)$. For any effect $F$ of $q$ such that 
$w_q(F)$ is nonvanishing, a map $w_F$ on $\cal A \times \cal E$ can be 
defined by
\begin{equation}
w_F(A,E)=\frac{w(A,E\otimes F)}{w(X,I\otimes F)} . \label{wF}
\end{equation}
It is a probability measure as it satisfies $w_F(X,I)=1$ by construction and any 
of the other conditions of the probability measures definition as soon as $w$ 
does. The probabilities given by Eq.\eqref{wF} can be interpreted as 
conditional probabilities given the outcome, characterized by effect $F$, of a 
measurement performed on $q$. With this notation, the probability measure 
of $h$ is $w_{I_q} : (A,E) \mapsto w(A,E\otimes I_q)$. For any finite 
sequence of effects $F_n$ such that $\sum_n F_n = I_q$, which can 
correspond to the outcomes $n$ of a measurement made on $q$, with 
probabilities $w_q(F_n)$, one has $w_{I_q}= \sum_n w_q(F_n) w_{F_n}$ 
where $n$ runs over all integers such that $w_q(F_n)>0$. This no-signaling 
condition means that the probability of any outcome, characterized by event 
$A$ and effect $E$, of a measurement performed on $h$ after a 
pre-measurement made on $q$ does not depend on this pre-measurement 
and is always given by $w_{I_q}(A,E)$. 

As mentioned above, we assume that any transformation $\cal T$ of $h$ 
fulfills the no-signaling condition
\begin{equation}
{\cal T}(w_{I_q})= \sum_n w_q(F_n) {\cal T}(w_{F_n}) , \label{nsc}
\end{equation}
where $n$ runs over all integers such that $w_q(F_n)>0$. The right-hand 
side of this equality corresponds to a protocol in which a pre-measurement, 
characterized by effects $F_n$, made on $q$ is followed by an independent 
finite-time evolution of $hq$ which is described by ${\cal T}$ for $h$. The 
probability of obtaining outcome $n$ in the pre-measurement and event $A$ 
and effect $E$ in a measurement performed on $h$ after the finite-time 
evolution is $w_q(F_n) {\cal T}(w_{F_n})(A,E)$. So, the probability of 
obtaining $A$ and $E$ regardless of the pre-measurement outcome is given 
by the right-hand side of Eq.\eqref{nsc}. Condition \eqref{nsc} means that, 
for any pre-measurement, this probability is equal to ${\cal T}(w_{I_q})(A,E)$ 
which corresponds to the particular case in which no pre-measurement is 
performed on $q$. Note that it is weaker than the complete positivity 
assumption considered in Ref.~\cite{PRA}. The existence of a probability 
measure $w'$ of $hq$ such that 
$w'(A,E\otimes F)=w_q(F) {\cal T}(w_{F})(A,E)$ is not assumed here. 

Let us discuss the case of two non-interacting quantum systems. Any 
finite-time evolution of such a bipartite system is described by a quantum 
operation $O_{bs}$ with product Kraus operators 
$K_{\alpha} \otimes L_{\beta}$ \cite{K}. Denote as $\Gamma$ the initial 
quantum state of the bipartite system. The corresponding probability measure 
of the second system, named $q$, is 
$w_q : F \mapsto \mathrm{tr} (\Gamma I \otimes F)$. The state of the 
bipartite system after obtaining outcome $n$ in a measurement, characterized 
by operators $M_n$, made on $q$, is $\Gamma_n = 
I \otimes M^{\phantom{\dag}}_n \Gamma I \otimes M^{\dag}_n/w_q(F_n)$ 
where $F_n=M^{\dag}_n M^{\phantom{\dag}}_n$. The corresponding state 
of the first system is $\rho_{F_n}=\mathrm{tr}_q \Gamma_n
=\mathrm{tr}_q (\Gamma I \otimes F_n)/w_q(F_n)$, where $\mathrm{tr}_q$ 
denotes the partial trace over ${\cal H}_q$, which leads to the analog of 
Eq.\eqref{wF}. After the finite-time evolution described by $O_{bs}$, it is 
$\mathrm{tr}_q O_{bs}(\Gamma_n)=O(\rho_{F_n})$ where $O$ is the first 
system quantum operation with Kraus operators $K_{\alpha}$. In other words, 
the two systems evolve independently from each other as defined above. 
Moreover, the quantum operation $O$ satisfies 
$O(\rho_{I_q})=\sum_n w_q(F_n) O(\rho_{F_n})$ and so the analog of 
Eq.\eqref{nsc} is fulfilled.

Consider now the bipartite system $hc$ consisting of hybrid system $h$ and 
any classical system $c$. Let $c$ be characterized by sample space $Y$ and 
event space ${\cal B}$. The bipartite system $hc$ is a hybrid system whose 
events are the elements of the product $\sigma$-algebra 
${\cal A} \otimes {\cal B}$ and effects are the elements of $\cal E$. Let $w$ 
be any probability measure of $hc$. The corresponding probability measure of 
$c$ is $p_c : B \mapsto w(X \times B,I)$. For any event $B$ of $c$ such that 
$p_c(B)>0$, a probability measure $w_{B}$ of $h$ can be defined by 
$w_B(A,E)=w(A \times B,E)/p_c(B)$. It can be interpreted in a similar way as 
$w_F$ given by Eq.\eqref{wF}. Similarly to Eq.\eqref{nsc}, we assume that 
any transformation $\cal T$ of $h$ fulfills, for any finite sequence of pairwise 
disjoint events $B_n$ such that $\cup_n B_n = Y$, 
\begin{equation}
{\cal T}(w_{Y})= \sum_n p_c(B_n) {\cal T}(w_{B_n}) , \label{nsc2}
\end{equation}
where $n$ runs over all integers such that $p_c(B_n)>0$. Note that this equality 
is trivially satisfied when $\cal T$ is the identity transformation $w \mapsto w$.

Let us discuss the case of two non-interacting classical systems with respective 
reference measures $\mu$ and $\nu$ and whose evolution is stochastic. 
Denote their initial probability density function as $\Gamma$. As the two 
systems do not interact with each other, it reads, after a finite-time evolution,
$$T_{bs}(\Gamma) \!:\! (x,y)\! \mapsto  \!\!\!
\int_{X \times Y} \! h_1(x,x') h_2(y,y') \Gamma(x',y') d\xi(x',y') , $$
where $h_1$ ($h_2$) is a non-negative measurable function on $X \times X$ 
($Y \times Y$) such that its integral with respect to its first argument on $X$ 
($Y$), for any value of its second argument, equals unity and $\xi$ is the 
product measure $\mu\otimes\nu$. The initial probability measure of the 
second system, named $c$, is 
$p_c : B \mapsto \int_{X \times B} \Gamma(x,y) d\xi(x,y)$. The probability 
density function of the bipartite system after obtaining outcome $n$ in a 
measurement, characterized by events $B_n$, made on $c$, is 
$\Gamma_n : (x,y) \mapsto \Gamma(x,y) 1_{B_n}(y)/p_c(B_n)$, where 
$1_B$ denotes the indicator function of the set $B$. The corresponding 
marginal probability density function of the first system is 
$f_{B_n} : x \mapsto \int_{B_n} \Gamma(x,y) d\nu(y) /p_c(B_n)$. After 
the finite-time evolution described by $T_{bs}$, it is 
$x \mapsto \int_Y T_{bs}(\Gamma_n)(x,y) d\nu(y)$ which is equal to 
$T(f_{B_n})$ with $T$ the first system transformation that changes any 
probability density function $f$ into $x \mapsto \int_X h_1(x,x') f(x') d\mu(x')$, 
i.e., the two systems evolve independently from each other. Moreover, the 
transformation $T$ obeys $T(f_Y)=\sum_n p_c(B_n) {\cal O}(f_{B_n})$ and 
so the analog of Eq.\eqref{nsc2} is fulfilled. 

Assume now that the systems evolve deterministically according to 
$T_{bs}(\Gamma) : (x,y) \mapsto 
\Gamma(\phi_1^{-1}(x),\phi_2^{-1}(y))$, where $\phi_1$ and $\phi_2$ are 
invertible measure-preserving maps, i.e., such that 
$\mu(\phi_1^{-1}(A))=\mu(A)$ and $\nu(\phi_2^{-1}(B))=\nu(B)$ for any 
events $A$ and $B$ \cite{LE}. The probability density measure $p_c$ and 
probability density functions $\Gamma_n$ and $f_{B_n}$ are given by the 
above expressions. The marginal probability density function of the first 
system after obtaining outcome $n$ in the measurement performed on $c$ 
and the finite-time evolution described by $T_{bs}$ is equal to $T(f_{B_n})$ 
where $T$ is here the first system transformation that changes any probability 
density function $f$ into $x \mapsto f(\phi_1^{-1}(x))$. So, the two systems 
evolve independently from each other and the analog of Eq.\eqref{nsc2} is 
fulfilled.

Conditions \eqref{nsc} and \eqref{nsc2} can be generalized as
\begin{equation}
{\cal T}(w_{Y,I_q})= \sum_{m,n} w_{h'}(B_m,F_n) 
{\cal T}(w_{B_m,F_n}) , \label{nsc3}
\end{equation}
where $w$ is any probability measure of a bipartite system consisting of $h$ 
and any other hybrid system $h'$ made up of classical subsystem $c$ and 
quantum subsystem $q$, the pairwise disjoint events $B_m$ are such that 
$\cup_m B_m=Y$, the effects $F_n$ are such that $\sum_n F_n=I_q$, 
$w_{h'} : (B,F) \mapsto w(X\times B,I \otimes F)$ denotes the probability 
measure of $h'$, $m$ and $n$ run over all integers such that 
$w_{h'}(B_m,F_n)>0$, and 
$w_{B,F} : (A,E) \mapsto w(A \times B,E \otimes F)/w_{h'}(B,F)$. 
The probability measures of $h$, $hc$ and $hq$ are, respectively, 
$w_{Y,I_q}$, $(C,E) \mapsto w(C,E \otimes I_q)$ and 
$(A,G) \mapsto w(A \times Y,G)$. Condition \eqref{nsc3} reduces to 
Eq.\eqref{nsc} when the sequence $( B_m)_m$ consists only of $Y$ and to 
Eq.\eqref{nsc2} when the sequence $( F_n)_n$ consists only of $I_q$. 
\section{Signaling transformations}\label{St}
We are here interested in bipartite systems made up of a system $s$ and 
an ancillary system that evolve independently from each other, as defined 
above, but with a transformation for $s$ that does not fulfill the corresponding 
no-signaling condition analoguous to Eqs.\eqref{nsc}-\eqref{nsc3}. This is 
only possible when this transformation is non-linear, see the following section. 
As the dynamical no-signaling condition is violated, a pre-measurement made 
on the ancillary can be detected by performing a measurement on $s$. In 
the ideal cases discussed below, one bit of information is transmitted with 
certainty from one system to the other. As the two systems can be separated 
by any distance, this transmission can be superluminal.

Consider first quantum systems and the transformation $T$ that changes any 
state $\rho$ of $s$ into 
$r(\mathrm{tr}(\rho^2)) E_1+(1-r(\mathrm{tr}(\rho^2))) E_2$ where $E_1$ 
and $E_2$ denote one-dimensional orthogonal projectors such that 
$E_1E_2=0$ and $r$ is a map from $[0,1]$ to itself such that $r(1/2)=1$ and 
$r(1)=0$. Suppose that $s$ and the ancillary, named $q$, are initially in the 
state $(E_1 \otimes F_1+E_2 \otimes F_2)/2$ where $F_1$ and $F_2$ denote 
one-dimensional orthogonal projectors of $q$ such that $F_1F_2=0$. If no 
pre-measurement is made on $q$, the state of $s$ is 
$\rho_{I_q}=(E_1+E_2)/2$ and $T(\rho_{I_q})=E_1$. If the 
pre-measurement characterized by effects $F_1$, $F_2$ and $F_3=I-F_1-F_2$ 
is performed on $q$, outcomes 1 and 2 occur each with probability $1/2$ and 
the resulting state of $s$ is $\rho_{F_n}=E_n$ where $n$ denotes the 
outcome. In all cases, it is changed by $T$ into $E_2$ and so 
$T(\rho_{I_q}) \neq (T(\rho_{F_1})+T(\rho_{F_2}))/2$. Moreover, 
performing the measurement characterized by effects $E_1$, $E_2$ and 
$E_3=I-E_1-E_2$ on $s$ after the transformation $T$ allows to know with 
certainty wether or not the pre-measurement on $q$ has been made.
For quantum systems, the relation between non-linear dynamics and superluminal
 communication is well known \cite{Gi}.

For classical systems with respective reference measures $\mu$ and $\nu$, 
let $T$ be the transformation that changes any probability density function $f$ 
of $s$ into $r(p_1^2+p_2^2)f_1+(1-r(p_1^2+p_2^2))f_2$ where 
$f_n=1_{A_n}/\mu(A_n)$ and $p_n=\int_{A_n} f d\mu$ with disjoint events 
$A_1$ and $A_2$. Suppose that the initial probability density function of $s$ 
and the ancillary, named $c$, is $(f_1 g_1+f_2 g_2)/2$ where 
$g_n=1_{B_n}/\nu(B_n)$ with disjoint events $B_1$ and $B_2$ of $c$. If no 
pre-measurement is made on $c$, the probability density function of $s$ is 
$f_Y=(f_1+f_2)/2$ and $T(f_Y)=f_1$. If the pre-measurement characterized 
by events $B_1$, $B_2$ and $B_3=Y \setminus (B_1\cup B_2)$ is performed 
on $c$, outcomes 1 and 2 occur each with probability $1/2$ and the resulting 
probability density function of $s$ is $f_{B_n}=f_n$ where $n$ denotes the 
outcome. In all cases, it is changed by $T$ into $f_2$ and so 
$T(f_Y) \neq (T(f_{B_1})+T(f_{B_2}))/2$. Moreover, performing the 
measurement characterized by events $A_1$, $A_2$ and 
$A_3=X \setminus (A_1\cup A_2)$ on $s$ after the transformation $T$ allows 
to know with certainty wether or not the pre-measurement on $c$ has been 
made.

For hybrid systems $h$ and $h'$ with respective reference measures $\mu$ 
and $\nu$, consider, for instance, the transformation $T$ that changes any 
hybrid state of $h$ into $x \mapsto 
(r(\mathrm{tr}(\rho^2))f_1(x)+(1-r(\mathrm{tr}(\rho^2)))f_2(x))\rho$ where 
$\rho$ is the state of $q_h$ that remains unchanged and $f_1$ and $f_2$ are 
defined above. Suppose that the initial state of the bipartite system is 
$(x,y) \mapsto f(x)g(y) (E_1 \otimes F_1+E_2 \otimes F_2)/2$ with probability 
density functions $f$ and $g$ for $c_h$ and the classical subsystem of $h'$, 
respectively, and one-dimensional orthogonal projectors $E_1$, $E_2$, $F_1$ 
and $F_2$ such that $E_1E_2=F_1F_2=0$. If no pre-measurement is made on 
$h'$, whose quantum subsystem is $q$, the state of $h$ is 
$\omega_{Y,I_q} : x \mapsto f(x)(E_1+E_2)/2$ and 
$T(\omega_{Y,I_q}) : x \mapsto f_1(x)(E_1+E_2)/2$. If the pre-measurement 
characterized by effects $F_1$, $F_2$ and $F_3=I_q-F_1-F_2$ is performed on 
$q$, outcomes 1 and 2 occur each with probability $1/2$ and the resulting state 
of $h$ is $\omega_{Y,F_n} : x \mapsto f(x)E_n$ where $n$ denotes the 
outcome. One has $T(\omega_{Y,F_n}) : x \mapsto f_2(x)E_n$ and so 
$T(\omega_{Y,I_q}) \neq (T(\omega_{Y,F_1})+T(\omega_{Y,F_2}))/2$. 
Moreover, performing the measurement characterized by events $A_1$, $A_2$ 
and $A_3$ on the classical system $c_h$ after the transformation $T$ 
allows to know with certainty wether or not the pre-measurement on the 
quantum system $q$ has been made.
\section{Equivalence of the no-signaling conditions and convex-linearity}
\label{Eq}
If a transformation $\cal T$ on a convex set $\cal W$ of hybrid probability 
measures is convex-linear, i.e., 
${\cal T}(tw_1+(1-t)w_2)=t{\cal T}(w_1)+(1-t){\cal T}(w_2)$ for any 
elements $w_1$ and $w_2$ of $\cal W$ and $t \in [0,1]$, it fulfills the 
no-signaling conditions \eqref{nsc}-\eqref{nsc3} since these equalities are 
satisfied when $\cal T$ is the identity transformation. Conversely, any of 
these no-signaling conditions implies the convex-linearity of $\cal T$ when all 
hybrid probability measures are allowed, as shown by the proposition below.
\begin{Prop}\label{Clr}
Let $\cal T$ be any map from the set of probability measures of a hybrid 
system to itself. The following assertions are equivalent. 
\begin{enumerate}[label=(\roman*)]
\item $\cal T$ fulfills the no-signaling condition \eqref{nsc}, \label{i}
\item $\cal T$ fulfills the no-signaling condition \eqref{nsc2}, \label{ii}
\item $\cal T$ fulfills the no-signaling condition \eqref{nsc3}, \label{iii}
\item $\cal T$ is convex-linear \label{iv}.
\end{enumerate}
\end{Prop}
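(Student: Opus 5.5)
The equivalences will be arranged as a cycle of implications. Item~\ref{iv} implies items~\ref{i}--\ref{iii}, as already noted before the proposition: the identity map satisfies Eqs.~\eqref{nsc}--\eqref{nsc3}, and a convex-linear map commutes with finite convex combinations, obtained by induction from the two-term definition. Item~\ref{iii} implies items~\ref{i} and~\ref{ii} because Eq.~\eqref{nsc3} reduces to Eq.~\eqref{nsc} or Eq.~\eqref{nsc2} for trivial partitions, with the ancillary $h'$ purely quantum or purely classical. It therefore remains to show that \ref{i} implies \ref{iv} and that \ref{ii} implies \ref{iv}.

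For \ref{ii}$\Rightarrow$\ref{iv}, fix probability measures $w_1,w_2$ of $h$ and $t\in(0,1)$; the cases $t=0,1$ are trivial. Take a classical ancillary $c$ with two points, $Y=\{1,2\}$ and ${\cal B}$ its power set. Define a probability measure of $hc$ by
$$w(C,E)=t\,w_1(C_1,E)+(1-t)\,w_2(C_2,E),$$
where $C_y=\{x:(x,y)\in C\}$ is the section of $C\in{\cal A}\otimes{\cal B}$. I would check that this is a hybrid probability measure: $\sigma$-additivity in $C$ and additivity in $E$ are inherited from $w_1$ and $w_2$, and $w(X\times Y,I)=1$. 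Then $p_c(\{1\})=t$, $p_c(\{2\})=1-t$, $w_{\{1\}}=w_1$, $w_{\{2\}}=w_2$ and $w_Y=tw_1+(1-t)w_2$. Applying Eq.~\eqref{nsc2} with $B_1=\{1\}$ and $B_2=\{2\}$ gives convex-linearity directly.

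For \ref{i}$\Rightarrow$\ref{iv}, the construction is analogous with a qubit ancillary $q$, ${\cal H}_q=\mathbb{C}^2$ with orthonormal basis $|1\rangle,|2\rangle$ and projectors $F_k=|k\rangle\langle k|$. The natural candidate is
$$w(A,G)=t\,w_1(A,\langle 1|G|1\rangle)+(1-t)\,w_2(A,\langle 2|G|2\rangle),$$
where $\langle k|G|k\rangle$ is the partial matrix element, an operator on ${\cal H}$. This is the main point to verify. The operator $\langle k|G|k\rangle$ is an effect of ${\cal H}$ whenever $0\le G\le I\otimes I_q$, since compression preserves positivity. The map $G\mapsto\langle k|G|k\rangle$ is additive and preserves convergent sums of effects, so the additivity axiom for $w$ follows from that of $w_1$ and $w_2$. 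Normalization holds because $\langle k|I\otimes I_q|k\rangle=I$. Then $w(A,E\otimes F_1)=t\,w_1(A,E)$ and $w(A,E\otimes F_2)=(1-t)\,w_2(A,E)$, so $w_q(F_1)=t$, $w_{F_1}=w_1$, $w_{F_2}=w_2$ and $w_{I_q}=tw_1+(1-t)w_2$. Equation~\eqref{nsc} with $F_1+F_2=I_q$ yields convex-linearity.

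The only delicate step is confirming that these glued maps satisfy all axioms of hybrid probability measures. This matters mainly in the quantum case, where countable additivity over effects on ${\cal H}\otimes{\cal H}_q$ must be transferred through the compression $G\mapsto\langle k|G|k\rangle$. That transfer relies on the compression being linear and weakly continuous, so that $\sum_n G_n$ converging to an effect implies $\sum_n\langle k|G_n|k\rangle$ converges to $\langle k|\sum_n G_n|k\rangle$.
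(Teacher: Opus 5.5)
Your proposal is correct and follows the paper's proof essentially step for step: the same cycle of implications, the same two-point classical and qubit ancillas, and the same glued measures (the paper's $t v_1+(1-t)v_2$) whose conditionals are $w_1$ and $w_2$. The only difference is that you define the quantum glued measure through the compressions $\langle k|G|k\rangle$, whereas the paper uses $\mathrm{tr}\bigl((\int_A \eta_k\, dp_k \otimes F_k)\, G\bigr)$ built from Eq.~\eqref{Gt}. The two definitions give the same measure, and yours checks the effect-additivity axiom directly from that of $w_k$, without needing the integral representation or the weak sequential continuity of the trace form.
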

As seen in Sec.~\ref{Hscrm}, it is usual to consider only hybrid probability 
measures $w$ such that $w(A,I)=0$ whenever $A$ is null for a reference 
measure. Within such an approach, the natural reference measure is the 
counting measure for discrete classical systems, i.e., with countable sample 
spaces \cite{PRA}, and the product measure for bipartite classical-hybrid and 
hybrid-hybrid systems. The above proposition also holds in this case since, in 
its proof, the probability measures built for bipartite systems vanish for the 
corresponding null events as soon as those of the hybrid system do. We 
discuss in the following transformations $\cal T$ with non-convex domains 
$\cal W$. In these cases, condition \eqref{nsc} or condition \eqref{nsc2} 
implies that $\cal T$ is convex-linear on any convex subset of $\cal W$.
\begin{proof} 
Let $t \in [0,1]$, $w_1$ and $w_2$ be any probability measures of hybrid 
system $h$, $p_1$, $p_2$, $\eta_1$ and $\eta_2$ be the corresponding 
classical probability measures and maps from $X$ to the set of density 
operators on $\cal H$ given by Eq.\eqref{Gt}. Consider a quantum two-level 
system $q$ and denote as $F_1$ and $F_2$ one-dimensional orthogonal 
projectors summing to $I_q$ and as $\cal G$ the set of effects on 
${\cal H}\otimes {\cal H}_q$. Define the maps 
$\tau_n : A \mapsto \int_A \eta_n dp_n \otimes F_n$ on $\cal A$ and 
$v_n : (A,G) \mapsto \mathrm{tr} ( \tau_n(A) G)$ on 
${\cal A} \times {\cal G}$. Since $\tau_n(A)$ is a positive trace-class operator 
for any event $A$, the linear form $T \mapsto \mathrm{tr}(T\otimes F_n G)$ 
on the set of trace-class operators on $\cal H$ is continuous for any bounded 
operator $G$, the map $G \mapsto \mathrm{tr}(\tau_n(A) G)$ on $\cal G$ 
is weakly sequentially continuous for any event $A$ \cite{PRA,BS}, and 
$\mathrm{tr} (\tau_n(X))=1$, $v_n$ is a probability measure of $hq$. It 
follows that $w=tv_1+(1-t)v_2$ also is. Direct calculations lead to 
$w_{I_q}=tw_1+(1-t)w_2$, $w_q(F_n)=t+(1-2t)(n-1)$ and 
$w_{F_n}=w_n$ and so condition \eqref{nsc} gives 
${\cal T}(tw_1+(1-t)w_2)= t {\cal T}(w_1)+(1-t) {\cal T}(w_2)$.

Consider a classical system $c$ with sample space $Y=\{ 1, 2 \}$ and the set 
of all subsets of $Y$ as event space $\cal B$. Let 
${\cal C}=\{(A_1 \times \{ 1 \})\cup (A_2 \times \{ 2 \}) \vert
A_1,A_2 \in {\cal A} \}$. It can be shown, by direct calculations, that $\cal C$ 
is a $\sigma$-algebra and that all sets $A\times B$, with $A \in {\cal A}$ and 
$B \in {\cal B}$, belong to it. As ${\cal A} \otimes {\cal B}$ is closed under 
unions, $\cal C$ is a subset of it and so ${\cal A} \otimes {\cal B}=\cal C$. 
Define the maps $v_y$ on ${\cal C}\times{\cal E}$, where $y \in Y$, as 
follows. For given effect $E$, $C \mapsto v_y(C,E)$ is the product measure of 
$A \mapsto w_y(A,E)$ on $\cal A$ and of the Dirac measure $\delta_y$ on 
$\cal B$, i.e., such that $\delta_y(B)=1$ when $y \in B$ and vanishes 
otherwise, and so $v_y : (C,E) \mapsto w_y(C_y,E)$ with 
$C_y=\{x \in X | (x,y) \in C \}$ \cite{Bi}. The map $v_y$ is a probability 
measure of $hc$ as it fulfills all the required properties. Let 
$w=t v_1+(1-t)v_2$. For any $C \in {\cal A} \otimes {\cal B}$, there are 
$A_1$ and $A_2$ in ${\cal A}$ such that 
$C=(A_1 \times \{ 1 \})\cup (A_2 \times \{ 2 \})$ and so one has 
$w(C,E)=tw_1(A_1,E)+(1-t)w_2(A_2,E)$. It follows that
$w_{Y}=tw_1+(1-t)w_2$, $p_c(\{ y \})=t+(1-2t)(y-1)$ and 
$w_{\{ y \}}=w_y$ and so condition \eqref{nsc2} gives 
${\cal T}(tw_1+(1-t)w_2)= t {\cal T}(w_1)+(1-t) {\cal T}(w_2)$.

We have shown that \ref{i} implies \ref{iv} and that \ref{ii} implies \ref{iv}. 
As seen in Sec.~\ref{Ahs}, \ref{iii} gives \ref{i} and \ref{ii}. As discussed 
above, \ref{iv} leads to the three other assertions. Consequently, all assertions 
are equivalent.
\end{proof}
\section{No quantum reaction on classical trajectories}\label{Nqb}
We examine here the consequences of the no-signaling conditions 
\eqref{nsc}-\eqref{nsc3} on hybrid dynamics with usual classical trajectories. 
More precisely, we consider Dirac measures $p=\delta_{x}$ for the classical 
degrees of freedom, i.e., such that $\delta_{x}(A)=1$ when $x$ belongs to 
event $A$ and vanishes otherwise. Let us first show the following proposition.
\begin{Prop}\label{ct}
Let $x$ be any element of the sample space $X$ of a hybrid system $h$, 
${\cal W}_x$ be the set of probability measures of $h$ given by Eq.\eqref{Gt} 
with $p=\delta_{x}$, and $\cal T$ be any probability measure transformation 
of $h$ that is convex-linear on ${\cal W}_x$.

If the transformed classical probability measure $A \mapsto {\cal T}(w)(A,I)$ 
is a Dirac measure for any $w$ of ${\cal W}_x$ then it is the same for 
any $w$ of ${\cal W}_x$.
\end{Prop}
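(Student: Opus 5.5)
We argue by contradiction, using convex-linearity on ${\cal W}_x$. Every element of ${\cal W}_x$ has the form $w_\eta:(A,E)\mapsto \delta_x(A)\,\mathrm{tr}(\eta E)$ with $\eta$ a density operator on $\cal H$. Hence ${\cal W}_x$ is convex: $t w_{\eta_1}+(1-t)w_{\eta_2}=w_{t\eta_1+(1-t)\eta_2}$, since only the value $\eta(x)$ matters $\delta_x$-almost everywhere. Suppose that for $w_1,w_2\in{\cal W}_x$ the transformed classical measures are Dirac measures $\delta_{x_1}$ and $\delta_{x_2}$ that differ.

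The first step is to find an event separating them. Since $\delta_{x_1}\neq\delta_{x_2}$ as measures on $\cal A$, there is an event $A\in{\cal A}$ with $\delta_{x_1}(A)=1$ and $\delta_{x_2}(A)=0$, so that $x_1\in A$ and $x_2\notin A$. We compare the Dirac measures themselves rather than the points, which avoids any assumption that $\cal A$ separates points.

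Next, take $t\in(0,1)$, say $t=1/2$, and set $w=tw_1+(1-t)w_2\in{\cal W}_x$. The map $A\mapsto{\cal T}(w)(A,I)$ is a linear function of ${\cal T}(w)$. Convex-linearity of $\cal T$ on ${\cal W}_x$ therefore gives
$$ {\cal T}(w)(A,I)=t\,{\cal T}(w_1)(A,I)+(1-t)\,{\cal T}(w_2)(A,I)=t\,\delta_{x_1}(A)+(1-t)\,\delta_{x_2}(A)=t\in(0,1).$$
A Dirac measure takes only the values $0$ and $1$, so the transformed classical measure of $w$ is not a Dirac measure, contradicting the hypothesis. Hence all transformed classical measures on ${\cal W}_x$ coincide as measures. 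If $\cal A$ separates points, they are equal to a single $\delta_{x'}$.

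No step is a genuine obstacle. The only points needing care are checking that ${\cal W}_x$ is convex, so that convex-linearity applies to $tw_1+(1-t)w_2$, and choosing to separate the two Dirac measures by an event rather than by their points.
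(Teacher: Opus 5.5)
Your proof is correct and follows essentially the paper's argument: the paper likewise takes the midpoint of two elements of ${\cal W}_x$, uses convex-linearity to get $p'_3=(p'_1+p'_2)/2$ for the transformed classical measures, and exploits that Dirac measures take only the values $0$ and $1$. The difference is only in presentation. The paper argues directly that $2\delta_{x_3}(A)=p'_1(A)+p'_2(A)$ forces $p'_1=p'_2=\delta_{x_3}$, whereas you argue by contradiction via an event separating $\delta_{x_1}$ from $\delta_{x_2}$, passing to the complement if needed.
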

\begin{proof}
Let $\eta_1$ and $\eta_2$ be any $p$-integrable maps from $X$ to the set of 
density operators on $\cal H$, $\eta_3=(\eta_1+\eta_2)/2$, 
$w_{s} : (A,E) \mapsto \int_A \mathrm{tr}(\eta_s(x') E) dp(x')$ and 
$p'_s : A \mapsto {\cal T}(w_s)(A,I)$. By assumption, as the probability 
measures $w_s$ belong to ${\cal W}_x$, there are $x_s$ such that 
$p'_s=\delta_{x_s}$. On the other hand, the convex-linearity of $\cal T$ on 
${\cal W}_x$ implies $p'_3=(p'_1+p'_2)/2$. Thus, one has, for any event $A$, 
$2\delta_{x_3}(A)=p'_1(A)+p'_2(A)$ and so, as $p'_1$ and $p'_2$ are maps 
to the unit interval, $p'_1(A)=p'_2(A)=0$ when $x_3 \notin A$ and 
$p'_1(A)=p'_2(A)=1$ when $x_3 \in A$, i.e., $p'_s=\delta_{x_3}$. Hence, 
$p'_1$ and $p'_2$ are equal to each other and so the probability measure 
$A \mapsto {\cal T}(w)(A,I)$ is the same for any $w \in {\cal W}_x$.
\end{proof}
For $p=\delta_{x}$, Eq.\eqref{Gt} simplifies into 
$w : (A,E) \mapsto \mathrm{tr}(\rho E) \delta_{x}(A)$ where $\rho$ is 
the state of $q_h$, which is given by $\rho=\eta(x)$. Note that, since, for 
given $w$, $\eta$ is $p$-almost everywhere unique, the maps $\eta$ 
corresponding to $w$ are all those such that $\eta(x)=\rho$. Here, the classical 
state $x$, e.g., a position and a momentum, and the quantum state $\rho$ can 
be considered, together, as the state of hybrid system $h$. Within a description 
of hybrid systems taking into account all quantum states but only Dirac 
measures for classical systems, the set of probability measures of $h$ is 
${\cal W}=\cup_{x \in X} {\cal W}_x$. Any set ${\cal W}_x$ is convex. 
Moreover, any convex subset of ${\cal W}$ is a subset of a set ${\cal W}_x$ 
since $(w_1+w_2)/2$, for instance, does not belong to ${\cal W}$ when 
$w_1 \in {\cal W}_x$ and $w_2 \in {\cal W}_{x'}$ with $x \neq x'$. A 
transformation $\cal T$ obeying the no-signaling condition \eqref{nsc} is 
convex-linear on any subset ${\cal W}_x$, see the proof of Proposition 
\ref{Clr}, and so on any convex subset of ${\cal W}$.

Consequently, for hybrid dynamics with classical trajectories and all quantum 
states fulfilling the no-signaling condition \eqref{nsc}, Proposition \ref{ct} 
implies that any finite-time evolution is described by a hybrid state 
transformation of the form $(x,\rho) \mapsto (\phi(x),T_x(\rho))$, with $\phi$ 
a map from $X$ to itself and $T_x$ a $x$-dependent convex-linear map from 
the set of density operators on $\cal H$ to itself. The time evolution of the 
classical system $c_h$ is not influenced by the quantum system $q_h$ 
whereas that of $q_h$ can be influenced by $c_h$. In other words, there can 
be a classical action but no quantum reaction for such dynamics. Condition 
\eqref{nsc} can be rewritten here as 
$T_x(\rho_{I_q})= \sum_n w_q(F_n) T_x(\rho_{F_n})$ for any $x$, with 
$w_q(F)=\mathrm{tr} (\Gamma I \otimes F)$ and 
$\rho_{F}=\mathrm{tr}_q (\Gamma I \otimes F)/w_q(F)$ where $F$ is any 
effect of $q$ and $\Gamma$ is the state of the quantum system made up of 
$q_h$ and $q$, since all conditional probability measures $w_F$ given by 
Eq.\eqref{wF} belong to the same set ${\cal W}_x$ and 
${\cal T}(w_F) : (A,E) \mapsto 
\mathrm{tr}(T_x(\rho_{F})E)\delta_{\phi(x)}(A)$. As $T_x$ obviously fulfills 
the above condition when it is convex-linear, the above form for state 
transformations and  the no-signaling condition \eqref{nsc} are equivalent for 
hybrid dynamics with classical trajectories and all quantum states.

Condition \eqref{nsc2} is trivially fulfilled when only Dirac measures are 
considered for classical systems. This can be seen as follows. The probability 
measure of bipartite system $hc$ reads as
$w : (C,E) \mapsto \mathrm{tr}(\rho E) \delta_{(x,y)}(C)$ 
with $x \in X$ and $y \in Y$. The corresponding probability measure of $c$ is 
$\delta_{y}$ and $w_B=w_Y$ for any event $B$ of $c$ such that 
$\delta_{y}(B)>0$. So, for any measurement performed on $c$, only one 
outcome $\tilde n$ is possible, the one corresponding to the event 
$B_{\tilde n}$ containing $y$, and $w_{B_{\tilde n}}=w_Y$. Consequently, 
the sum in Eq.\eqref{nsc2} reduces to the term with $n=\tilde n$ which is 
equal to the left-hand side of the equation. Similar arguments show that 
condition \eqref{nsc3} becomes condition \eqref{nsc} for $hq$.

Many classical-quantum models are commonly used, in which classical degrees 
of freedom $x$, e.g., electromagnetic ones, evolve with the time $t$ 
independently of quantum degrees of freedom whose evolution is governed by 
a time-dependent Hamiltonian $H(x(t))$ \cite{CDG}. Within such an approach, 
any finite-time evolution of the hybrid system is described by a transformation 
$(x,\rho) \mapsto (\phi(x),U_x\rho U_x^\dag)$, with $\phi$ a map from $X$ 
to itself and $U_x$ a $x$-dependent unitary operator on $\cal H$, which is of 
the form given above. As a second example, consider the Born-Oppenheimer 
molecular dynamics approach in which the nuclei are treated classically and 
the electrons quantum-mechanically. Here, contrary to the above discussed 
hybrid dynamics, not all density operators are allowed for the electronic system 
which has no true dynamics. It is assumed to always be in the ground state 
$| \psi (r) \rangle$ of a Hamiltonian $H(r)$ that depends on the nuclei positions 
$r$. These positions evolve according to usual Newtonian equations with forces 
derived from the $r$-dependent ground energy of $H(r)$ \cite{MH}. The set of 
probability measures of the hybrid nuclei-electrons system reads as 
${\cal W}=\cup_{x \in X} {\cal W}_x$ where ${\cal W}_x$ consists of the 
single element $w : (A,E) \mapsto \langle \psi(r) | E | \psi (r) \rangle 
\delta_x(A)$ and $x$ denote the positions and momenta of the nuclei. 
Any finite-time evolution is described by a transformation 
$(x,\eta(x)) \mapsto (\phi(x),\eta \circ \phi(x))$ with $\phi$ a map from $X$ to 
itself and $\eta : x \mapsto | \psi (r) \rangle \langle \psi(r) |$. In this case, 
condition \eqref{nsc3} is trivially satisfied since the only possible states for the 
quantum system made up of $q_h$ and $q$ are product states and so the 
conditional probability measures $w_{B_m,F_n}$ appearing in Eq.\eqref{nsc3} 
are all equal to $w_{Y,I_q}$. 

Within some descriptions of hybrid systems, Dirac measures are not possible 
classical probability measures. This is the case, for instance, if $X$ is an 
Euclidean space, $\mu$ is the usual Lebesgue measure and only hybrid 
probability measures $w$ such that $w(A,I)=0$ for any $\mu$-null event $A$ 
are allowed, see Sec.~\ref{Hscrm}. However, within such approaches, classical 
localized states can be described approximately if, for any element $x$ of $X$, 
there is a sequence of classical probability measures $p_{n}$ such that 
$\lim_{n \rightarrow \infty} p_n(A)=\delta_x (A)$ for any event A. The 
sequence of probabilities $w_n(A,E)$ given by Eq.\eqref{Gt} with $p=p_n$ 
then converges to $\mathrm{tr}(\eta(x) E) \delta_{x}(A)$ for any event A 
and effect $E$ \cite{fn}. We say that $(p_n)_n$ is a Dirac sequence at $x$. 
We prove below an analog of Proposition \ref{ct} for Dirac sequences. For 
that purpose, let us first observe that Eq.\eqref{Gt} gives correct hybrid 
probability measures for maps $\eta$ that do not depend on $p$, as shown 
by the following lemma.
\begin{Lem}
Let $\eta$ be any map from $X$ to the set of density operators on $\cal H$ 
and $p$ be any probability measure on $\cal A$. 

The map $\eta$ is such that $x \mapsto \mathrm{tr}(\eta(x) M)$ is 
$p$-integrable for any bounded operator $M$ if and only if it is such that 
$x \mapsto \mathrm{tr} (\eta(x)E)$ is measurable for any effect $E$.
\end{Lem}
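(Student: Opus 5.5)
The forward direction is immediate. If $x \mapsto \mathrm{tr}(\eta(x)M)$ is $p$-integrable for every bounded $M$, it is in particular measurable, and every effect $E$ is a bounded operator. The content of the lemma is the converse, and I will prove it in three steps: extend measurability from effects to all bounded operators, observe that boundedness then gives integrability, and address what ``measurable'' means.

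\emph{Step 1: from effects to self-adjoint operators.} Assume $x \mapsto \mathrm{tr}(\eta(x)E)$ is measurable for every effect $E$. Let $M$ be a self-adjoint bounded operator with $M\neq 0$. The operator $E=(M+\|M\|I)/(2\|M\|)$ is then an effect, and
\[
\mathrm{tr}(\eta(x)M)=2\|M\|\,\mathrm{tr}(\eta(x)E)-\|M\|\,\mathrm{tr}(\eta(x)) .
\]
Since $\mathrm{tr}(\eta(x))=1$ (equivalently, $\mathrm{tr}(\eta(x)I)$ with $I$ an effect), the left-hand side is a linear combination of measurable functions, hence measurable.

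\emph{Step 2: from self-adjoint to arbitrary bounded operators.} A general bounded $M$ decomposes as $M=M_1+iM_2$ with $M_1=(M+M^\dag)/2$ and $M_2=(M-M^\dag)/(2i)$, both self-adjoint. Then $\mathrm{tr}(\eta(x)M)=\mathrm{tr}(\eta(x)M_1)+i\,\mathrm{tr}(\eta(x)M_2)$ is a complex-valued function whose real and imaginary parts are measurable by Step 1, so it is measurable.

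\emph{Step 3: integrability.} Since $\eta(x)$ is a density operator, $|\mathrm{tr}(\eta(x)M)|\le \|\eta(x)\|_1\|M\|=\|M\|$. A measurable function bounded by a constant is integrable with respect to the finite measure $p$, which gives the required $p$-integrability for every bounded $M$.

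The main subtlety lies in what ``measurable'' means, not in any hard estimate. If it is taken with respect to $\mathcal A$, the argument above is complete. If instead $p$-integrability is understood relative to the $p$-completion of $\mathcal A$, the same linear-combination argument still applies, because measurable functions with respect to any fixed $\sigma$-algebra form a vector space. Either way, I expect the only points needing care to be that the construction in Step 1 produces a genuine effect, i.e.\ $0\le E\le I$, and that the case $M=0$ is handled separately.
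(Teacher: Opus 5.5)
Your proof is correct and follows the same route as the paper: reduce an arbitrary bounded $M$ to effects by linearity, then get integrability from the bound $|\mathrm{tr}(\eta(x)M)|\le\|M\|$ and $p(X)=1$. The only cosmetic difference is in the reduction step. The paper writes $M=P_1-P_2+iP_3-iP_4$ with positive $P_r$ and rescales each as $P_r=\|P_r\|E_r$. You instead use the Cartesian decomposition and the shift $E=(M+\|M\|I)/(2\|M\|)$, which relies on the normalization $\mathrm{tr}\,\eta(x)=1$.
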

\begin{proof}
Assume that $\eta$ is such that $x \mapsto \mathrm{tr} (\eta(x)E)$ is 
measurable for any effect $E$. For any positive operator $P$, there is an effect 
$E$ such that $P=\Vert \sqrt{P} \Vert_{op}^2 E$, where 
$\Vert \cdot \Vert_{op}$ denotes the operator norm, and so
$x \mapsto \mathrm{tr} (\eta(x)P)$ is measurable. Let $M$ be any bounded 
operator. It can be written as $M=P_1-P_2+iP_3-iP_4$ with positive operators 
$P_r$ \cite{BS}. The map $g_{r} : x \mapsto \mathrm{tr}(\eta(x) P_r)$ is 
non-negative and, as seen above, measurable and so $I_{r}=\int g_{r} dp$ is 
well-defined. Moreover, as $g_{r}$ is upper bounded by the operator norm of 
$P_r$ and $p(X)=1$, $I_{r}$ is finite. Thus, $g_{r}$ is $p$-integrable and so 
is $g_{1}-g_{2}+ig_{3}-ig_{4} : x \mapsto \mathrm{tr} (\eta(x)M)$. The 
converse is trivial.
\end{proof}
We call any map $\eta$ from $X$ to the set of density operators on $\cal H$ 
such that $x \mapsto \mathrm{tr} (\eta(x)E)$ is measurable for any effect $E$ 
a quantum state map. 
\begin{Prop}
Let $x$ be any element of the sample space $X$ of a hybrid system $h$, 
${\cal D}_x$ be any convex set of Dirac sequences at $x$, ${\cal W}_x$ be the 
set of probability measures of $h$ given by Eq.\eqref{Gt} with $p$ equal to any
element of any sequence of ${\cal D}_x$ and $\cal T$ be any probability 
measure transformation of $h$ that is convex-linear on any convex subset of 
${\cal W}_x$.

If, for any sequence of hybrid probability measures 
$w_n~:~(A,E) \mapsto \int_A \mathrm{tr}(\eta(x)E) dp_n(x)$ with 
$(p_n)_n \in {\cal D}_x$ and $\eta$ a quantum state map, there is $x'$ 
such that $(p'_n)_n$ is a Dirac sequence at $x'$, where 
$p'_n : A \mapsto {\cal T}(w_n)(A,I)$, then $x'$ is the same for any such 
sequence $(w_n)_n$.
\end{Prop}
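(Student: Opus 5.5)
The plan is to mimic the proof of Proposition \ref{ct}, working at the level of each index $n$ and then passing to the limit. Take two sequences $w_{1,n} : (A,E) \mapsto \int_A \mathrm{tr}(\eta_1(x)E) dp_{1,n}(x)$ and $w_{2,n} : (A,E) \mapsto \int_A \mathrm{tr}(\eta_2(x)E) dp_{2,n}(x)$. Here $(p_{s,n})_n \in {\cal D}_x$ and the $\eta_s$ are quantum state maps. By the lemma, each $w_{s,n}$ is a genuine hybrid probability measure. By hypothesis, there are $x_1$ and $x_2$ such that $(p'_{s,n})_n$ is a Dirac sequence at $x_s$. We must show $x_1=x_2$.

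The key construction is a mixed sequence $w_{3,n}=(w_{1,n}+w_{2,n})/2$. We need it to be of the admissible form with a single $p_{3,n}$ and a single quantum state map, so that the hypothesis applies to it as well. First set $p_{3,n}=(p_{1,n}+p_{2,n})/2$, which lies in a sequence of ${\cal D}_x$ since ${\cal D}_x$ is convex. The quantum state map needs more care because $\eta_1\neq\eta_2$ in general, so $w_{3,n}$ need not be of the form $\int_A \mathrm{tr}(\eta E)dp_{3,n}$ with an $n$-independent $\eta$.

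The simplest way around this is to first compare each sequence with a common reference sequence. Take $\eta_1=\eta_2=\eta$ (a fixed quantum state map) with different $p$-sequences. Then $w_{3,n}=\int_A\mathrm{tr}(\eta E)dp_{3,n}$ is admissible. Separately, compare two different $\eta$'s with the same $(p_n)_n$. In that case $w_{3,n}$ is given by $p_n$ and the quantum state map $(\eta_1+\eta_2)/2$, which is measurable in the required sense, so it is admissible. Chaining the two comparisons (varying $\eta$ with $(p_n)$ fixed, then varying $(p_n)$ with $\eta$ fixed) yields that $x'$ is the same for all sequences.

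In each case the pair $w_{1,n},w_{2,n}$ lies in a convex subset of ${\cal W}_x$; for instance, the convex hull of the two measures suffices, provided it lies in ${\cal W}_x$. This is where the convexity of ${\cal D}_x$ is used, and it needs a short check that mixtures of the relevant measures are still of the form defining ${\cal W}_x$. Granting this, convex-linearity gives $p'_{3,n}=(p'_{1,n}+p'_{2,n})/2$ for every $n$. Taking $n\to\infty$ for each event $A$, we get $2\delta_{x_3}(A)=\delta_{x_1}(A)+\delta_{x_2}(A)$. The unit-interval argument of Proposition \ref{ct} then forces $\delta_{x_1}=\delta_{x_2}=\delta_{x_3}$ as maps on ${\cal A}$.

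The main obstacle is the last step, because equality of Dirac measures gives $x_1=x_2$ only if events separate points. I would handle this by interpreting "$x'$ is the same" up to the identification by $\delta_{x'}$, or by noting that for standard sample spaces (e.g. Euclidean with Borel sets) singletons are events. A secondary obstacle is verifying that $w_{3,n}$ belongs to ${\cal W}_x$, i.e. that the mixed measures stay within the stated form; the two-step chaining above is meant to avoid needing a common $\eta$ for different $p$'s.
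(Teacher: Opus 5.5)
Your proposal is correct and is essentially the paper's proof. The paper also first fixes $(p_n)_n$ and mixes two quantum state maps via $(\eta_1+\eta_2)/2$, then fixes $\eta$ and mixes two sequences via $((p_{1,n}+p_{2,n})/2)_n\in{\cal D}_x$, each time passing to the limit in $p'_{3,n}=(p'_{1,n}+p'_{2,n})/2$ and chaining the two conclusions. Your caveat that $\delta_{x_1}=\delta_{x_2}$ forces $x_1=x_2$ only when events separate points is legitimate; the paper's ``which gives $x_s=x_3$'' tacitly assumes this.
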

\begin{proof}
Let $\eta_1$ and $\eta_2$ be any quantum state maps, 
$\eta_3=(\eta_1+\eta_2)/2$, and $(p_n)_n$ be any sequence of ${\cal D}_x$. 
We define $w_{s,n} : (A,E) \mapsto \int_A \mathrm{tr}(\eta_s(x) E) dp_n(x)$ 
and $p'_{s,n} : A \mapsto {\cal T}(w_{s,n})(A,I)$. The probability measures 
$w_{s,n}$ and $p'_{s,n}$ depend {\it a priori} on $p_n$ and $\eta_s$. The 
convex-linearity of $\cal T$ on any convex subset of ${\cal W}_x$ implies
$p'_{3,n}=(p'_{1,n}+p'_{2,n})/2$. By assumption, there are elements $x_s$ 
of $X$ such that, for any event $A$, 
$\lim_{n \rightarrow \infty} p'_{s,n}(A)=\delta_{x_s}(A)$. Thus, one has 
$2\delta_{x_3}(A)=\delta_{x_1}(A)+\delta_{x_2}(A)$ which gives $x_s=x_3$. 
Consequently, $x_1$ and $x_2$ are equal to each other and so depend only 
on $(p_n)_n$.

Let $(p_{1,n})_n$ and $(p_{2,n})_n$ be any elements of ${\cal D}_x$, 
$(p_{3,n})_n=((p_{1,n}+p_{2,n})/2)_n$, and $\eta$ be any quantum state 
map. We define 
$w_{s,n} : (A,E) \mapsto \int_A \mathrm{tr}(\eta(x) E) dp_{s,n}(x)$ and 
$p'_{s,n} : A \mapsto {\cal T}(w_{s,n})(A,I)$. Since, for any measurable 
non-negative map $g$ on $X$, 
$\int g d p_{3,n}=\int g d p_{1,n}/2+\int g d p_{2,n}/2$, one has 
$w_{3,n}=(w_{1,n}+w_{2,n})/2$. By assumption, as the sequences 
$(p_{s,n})_n$ belong to ${\cal D}_x$, there are elements $x_s$ of $X$ such 
that, for any event $A$, 
$\lim_{n \rightarrow \infty} p'_{s,n}(A)=\delta_{x_s}(A)$. We have shown 
above that $x_s$ depends only on $(p_{s,n})_n$. The convex-linearity of 
$\cal T$ on any convex subset of ${\cal W}_x$ implies 
$p'_{3,n}=(p'_{1,n}+p'_{2,n})/2$. Arguing as above gives $x_1=x_2$ 
which finishes the proof.
\end{proof}
\section{No classical reaction when pure quantum states remain pure}
\label{Ncb}
In this section, we are interested in no-signaling hybrid dynamics for which pure 
states of quantum systems remain pure. We say that hybrid probability measures 
of the form $(A,E) \mapsto \mathrm{tr}(\rho E) p(A)$, where $\rho$ is a 
quantum state and $p$ is a classical probability measure, are uncorrelated as 
there are no statistical correlations between the classical and the quantum 
degrees of freedom for such probability measures. The following two lemmas 
are useful for the proofs of Propositions \ref{ps} and \ref{Ue}.
\begin{Lem}\label{Lemps}
The probability measure of a hybrid system is uncorrelated whenever the state 
of its quantum subsystem is pure. 
\end{Lem}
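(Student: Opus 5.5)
We start from the representation \eqref{Gt}: any probability measure $w$ of $h$ reads $w(A,E)=\int_A \mathrm{tr}(\eta(x)E)\,dp(x)$ with $p$ the classical probability measure and $\eta$ a $p$-integrable map to density operators. The state of $q_h$ is then $\rho=\int \eta\,dp$. We assume $\rho=|\psi\rangle\langle\psi|$ is pure. The plan is to show that $\eta(x)=\rho$ for $p$-almost every $x$. Then $w(A,E)=\mathrm{tr}(\rho E)p(A)$, which is the uncorrelated form.

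The key step uses the effect $E=I-|\psi\rangle\langle\psi|$, which is a projector and hence an effect. The map $g:x\mapsto \mathrm{tr}(\eta(x)E)=1-\langle\psi|\eta(x)|\psi\rangle$ is non-negative and $p$-integrable. Since the Bochner integral commutes with the continuous linear form $T\mapsto\mathrm{tr}(TE)$, we get
\begin{equation*}
\int g\,dp=\mathrm{tr}(\rho E)=1-\langle\psi|\rho|\psi\rangle=0 .
\end{equation*}
A non-negative integrable function with vanishing integral vanishes $p$-almost everywhere. Hence $\langle\psi|\eta(x)|\psi\rangle=1$ for $p$-almost every $x$.

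Next we use that a density operator $\sigma$ with $\langle\psi|\sigma|\psi\rangle=1$ must equal $|\psi\rangle\langle\psi|$. Indeed, $\mathrm{tr}(\sigma(I-|\psi\rangle\langle\psi|))=0$ and the operator $(I-|\psi\rangle\langle\psi|)\sigma(I-|\psi\rangle\langle\psi|)$ is positive, so it vanishes. Then $\sqrt{\sigma}(I-|\psi\rangle\langle\psi|)=0$, which gives $\sigma=\sigma|\psi\rangle\langle\psi|=|\psi\rangle\langle\psi|\sigma$. It follows that $\sigma=|\psi\rangle\langle\psi|\,\sigma\,|\psi\rangle\langle\psi|=|\psi\rangle\langle\psi|$. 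Applied to $\sigma=\eta(x)$, this yields $\eta(x)=\rho$ for $p$-almost every $x$.

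Substituting into \eqref{Gt}, the integrand $\mathrm{tr}(\eta(x)E)$ equals $\mathrm{tr}(\rho E)$ outside a $p$-null set, so $w(A,E)=\mathrm{tr}(\rho E)p(A)$ for all $A$ and $E$. The only slightly delicate point is exchanging the trace with the Bochner integral. This is justified because $T\mapsto\mathrm{tr}(TE)$ is a continuous linear form on trace-class operators; alternatively, it follows directly from $w(X,E)=\mathrm{tr}(\rho E)$, stated in Sec.~\ref{Hscrm}. In the density-function description, the same argument applies with $\eta=\omega/f$ on the set where $f>0$.
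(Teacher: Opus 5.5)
Your proof is correct and follows the paper's route. Both arguments integrate the non-negative function $1-\mathrm{tr}(\rho\,\eta(x))$ against $p$ to get $1-\mathrm{tr}(\rho^2)=0$, and conclude $\mathrm{tr}(\rho\,\eta(x))=1$ for $p$-almost every $x$, hence $\eta(x)=\rho$. The only difference is the final step: the paper uses $\mathrm{tr}((\rho-\eta(x))^2)=\mathrm{tr}(\eta(x)^2)-1\le 0$, whereas your projector/square-root argument is an equally valid alternative.
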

This lemma shows for hybrid systems the analog of the well-known fact that 
there is no correlation between two quantum systems when one of them is in 
a pure state.
\begin{proof}
For any pure state $\rho$ of the quantum subsystem of a hybrid system $h$, 
one finds 
$\int (1-\mathrm{tr}(\rho \eta(x)) ) dp(x)=1-\mathrm{tr} (\rho^2)=0$ using 
Eq.\eqref{Gt} and the continuity of the linear form 
$T \mapsto \mathrm{tr}(\rho T)$ on the set of trace-class operators $T$ on 
$\cal H$ \cite{BI}. As $1-\mathrm{tr}(\rho \eta(x))$ is non-negative for any 
$x$, the above equality implies that there is a $p$-null set $A$ such that, for 
any $x \notin A$, $\mathrm{tr}(\rho \eta(x))=1$ and so 
$\mathrm{tr}((\rho - \eta(x))^2)=\mathrm{tr}( \eta(x)^2)-1$. Since this 
quantity is non-negative, one has $\mathrm{tr}( \eta(x)^2)=1$ and hence 
$\eta(x)=\rho$. Thus, for any effect $E$, 
$\mathrm{tr} (\eta(x)E)=\mathrm{tr} (\rho E)$ for $p$-almost every $x$, and 
so the probability measure of $h$ is uncorrelated, see Eq.\eqref{Gt}.
\end{proof}
When a hybrid probability measure transformation $\cal T$ is convex-linear on 
a convex set $\cal W$, it fulfills 
${\cal T}(\sum_n t_n w_n)=\sum_n t_n {\cal T}(w_n)$ for any finite 
sequences of non-negative numbers $t_n$ summing to unity and probability 
measures $w_n$ of $\cal W$. The following lemma extends this result to 
infinite sequences.
\begin{Lem}\label{Lemiclc}
Let $\cal W$ be any convex set of hybrid probability measures sequentially 
closed with respect to pointwise convergence, $(t_n)_{n}$ and $(w_n)_{n}$ be 
any infinite sequences of, respectively, non-negative numbers summing to unity 
and elements of $\cal W$ and $\cal T$ be any probability measure 
transformation that is convex-linear on $\cal W$.

The sum $w(A,E)=\sum_n t_n w_n(A,E)$ converges for any event $A$ and 
effect $E$, the map $w$ so defined belongs to $\cal W$ and ${\cal T}(w)$ is 
given by ${\cal T}(w)(A,E)=\sum_n t_n {\cal T}(w_n)(A,E)$.
\end{Lem}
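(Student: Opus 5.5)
The plan is to approximate the infinite convex combination by finite ones. Convex-linearity handles each finite piece, and the remainder is controlled by the fact that all values lie in $[0,1]$.

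\emph{Convergence.} For any event $A$ and effect $E$, each $w_n(A,E)$ lies in $[0,1]$. Since $\sum_n t_n=1$, the series $\sum_n t_n w_n(A,E)$ has non-negative terms bounded by $t_n$, so it converges, and its sum $w(A,E)$ lies in $[0,1]$.

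\emph{Membership of $w$ in $\cal W$.} Let $S_N=\sum_{n\le N}t_n$ and consider only $N$ with $S_N>0$. Define
\[
u_N=\sum_{n\le N}(t_n/S_N)\,w_n .
\]
This is a finite convex combination of elements of $\cal W$, so it belongs to $\cal W$ by convexity. As $N\to\infty$ we have $S_N\to1$, so $u_N(A,E)\to w(A,E)$ pointwise. Sequential closedness of $\cal W$ under pointwise convergence then gives $w\in\cal W$. Note that this route never needs to check directly that $w$ is $\sigma$-additive and satisfies the effect-additivity condition; closedness supplies both.

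\emph{Formula for ${\cal T}(w)$.} If $S_N=1$ for some $N$, the claim is just finite convex-linearity, so assume $S_N<1$ for every $N$. Define the tail
\[
r_N=\sum_{n>N}(t_n/(1-S_N))\,w_n .
\]
By the previous step, applied to the renormalized tail weights, $r_N\in\cal W$. We then have the decomposition $w=S_N u_N+(1-S_N)r_N$, which is a convex combination of two elements of $\cal W$. Convex-linearity of $\cal T$ on $\cal W$, used twice, gives
\[
{\cal T}(w)=\sum_{n\le N}t_n{\cal T}(w_n)+(1-S_N){\cal T}(r_N).
\]
Here the first step splits off $r_N$, and the second expands ${\cal T}(u_N)$ by finite convex-linearity. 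Evaluating at $(A,E)$, the last term is bounded by $1-S_N$ because ${\cal T}(r_N)$ is a probability measure. Letting $N\to\infty$, this term vanishes and the finite sums converge to $\sum_n t_n{\cal T}(w_n)(A,E)$, which is the claimed identity.

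The only delicate point is showing that the tail $r_N$ lies in $\cal W$. Finite convex-linearity alone cannot reach $w$, so we need this decomposition, and the closedness hypothesis is used exactly here. No continuity of $\cal T$ is required, because boundedness by $1-S_N$ controls the remainder.
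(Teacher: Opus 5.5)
Your argument follows the paper's proof. The renormalized partial sums $u_N$ (the paper's $w'_N$) together with closedness give $w\in{\cal W}$. The split $w=S_N u_N+(1-S_N)r_N$ (the paper's $w''_N$) together with convex-linearity gives the finite identity, and the remainder is bounded by $1-S_N$. You get $r_N\in{\cal W}$ by reapplying the membership step to the weights $t_n/(1-S_N)$, $n>N$. The paper instead defines $w''_N=(w-T_Nw'_N)/(1-T_N)$ and takes the limit of $w''_{N,N'}$. Your route is slightly tidier but produces the same object.

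The one step that is not justified under the paper's conventions is your remark that closedness ``supplies'' $\sigma$-additivity and effect-additivity of $w$. That holds only if ${\cal W}$ is sequentially closed inside the space of all real maps on ${\cal A}\times{\cal E}$. With that hypothesis your proof is complete as written. In the paper, however, closedness is meant within the set of hybrid probability measures. When it is verified for ${\cal W}_p^{uc}$ in the proof of Proposition~\ref{ps}, only sequences converging pointwise \emph{to a probability measure} $w$ are considered, and this assumption is used to obtain the density operator $\hat\rho(w)$. Under that convention you may invoke closedness only after knowing that $w$ is a hybrid probability measure. Your stronger reading is not what the paper establishes for the sets to which the lemma is applied; proving it would need extra input, such as the weak sequential completeness of the trace class.

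The repair is short and is exactly what the paper does:
\begin{itemize}
\item $w(X,I)=\sum_n t_n=1$.
\item For pairwise disjoint $A_m$, $w(\cup_m A_m,E)=\sum_n\sum_m t_n w_n(A_m,E)=\sum_m w(A_m,E)$, since a double series with non-negative terms may be summed in either order.
\item The same interchange gives $w(A,\sum_k E_k)=\sum_k w(A,E_k)$.
\end{itemize}
Once this check is part of your membership step, reapplying that step to the tail also shows that $r_N$ is a probability measure in ${\cal W}$. The rest of your proof then goes through unchanged.
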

\begin{proof} 
Let $W_N=\sum_{n \le N} t_n w_n(A,E)$ and $T_N=\sum_{n \le N} t_n$ for 
any $N$, event $A$ and effect $E$. As $|W_{N'}-W_N| \le |T_{N'}-T_N|$ for 
any $N$ and $N'$, $(W_N)_N$ is a Cauchy sequence and so converges. Since 
$W_N \ge 0$ for any $N$, the values of $w$ are non-negative. Consider any 
effect $E$ and any sequence of pairwise disjoint events $A_m$ and let 
$A=\cup_m A_m$. The properties of the probability measures $w_n$ lead to 
$w(A,E)=\sum_n \sum_m t_n w_n (A_m,E)$. As the summands are 
non-negative, the sums can be interchanged and so 
$w(A,E)=\sum_m w(A_m,E)$. Similarly, one has $w(A,E)=\sum_n w(A,E_n)$ 
for any event $A$ and any sequence of effects $E_n$ such that 
$E=\sum_n E_n$ is an effect. Since, moreover, $w(X,I)=\sum_n t_n=1$, $w$ 
is a probability measure. For any $N$, define the probability measure 
$w'_{N}=\sum_{n \le N} t_n w_n/T_N$. It belongs to $\cal W$ as this set is 
convex. Moreover, the sequence $(w'_{N})_N$ pointwise converges to $w$ 
which thus belongs to $\cal W$ as it is sequentially closed with respect to 
pointwise convergence.

For any $N$, let $w''_{N}=(w-T_N w'_N)/(1-T_N)$. By construction, the values 
of $w''_{N}$ are non-negative and $w''_{N}(X,I)=1$. Since, furthermore, the 
maps $w$ and $w'_N$ are probability measures, so is $w''_{N}$. For any $N$ 
and $N'>N$, we define the probability measure 
$w''_{N,N'}=\sum_{n=N+1}^{N'} t_n w_n/(T_{N'}-T_N)$. It belongs to the 
convex set $\cal W$ and the sequence $(w''_{N,N'})_{N'}$ pointwise converges 
to $w''_N$ which thus belongs to the closed set $\cal W$. The convex-linearity 
of ${\cal T}$ on $\cal W$ gives ${\cal T}(w)
=\sum_{n \le N} t_n {\cal T}(w_n)+(1-T_N){\cal T}(w''_{N})$. Thus, for any 
event $A$ and effect $E$, 
$|{\cal T}(w)(A,E)-\sum_{n \le N} t_n {\cal T}(w_n)(A,E)|$ is upper bounded 
by $1-T_N$ and so vanishes as $N \rightarrow \infty$.
\end{proof}
The following proposition can now be shown. We denote as $\hat \rho(w)$ the 
quantum subsystem state corresponding to the hybrid probability measure $w$, 
i.e., such that $\mathrm{tr} (\hat \rho(w) E)=w(X,E)$ for any effect $E$, and 
as ${\cal W}_{\rho}$ the set of probability measures $w$ such that 
$\hat \rho(w)=\rho$. 
\begin{Prop}\label{ps}
Let ${\cal W}_{uc}$ be the set of uncorrelated probability measures of a hybrid 
system $h$ and $\cal T$ be any probability measure transformation of $h$ that 
is convex-linear on any convex subset of ${\cal W}_{uc}$.

If $\cal T$ is such that $\hat \rho \circ {\cal T}(w)$ is pure for any $w$ for 
which $\hat \rho(w)$ is pure then, for any state $\rho$ of the quantum 
subystem $q_h$ of $h$, $\hat \rho \circ {\cal T}(w)$ is the same for any 
$w$ of ${\cal W}_{\rho} \cap {\cal W}_{uc}$. 
\end{Prop}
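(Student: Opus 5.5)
The plan is to follow the pattern of Proposition \ref{ct}: an extremality argument (pure states are extreme points) replaces the extremality of Dirac measures. Uncorrelated measures have the form $w_{\rho,p}:(A,E)\mapsto \mathrm{tr}(\rho E)p(A)$. Fix a state $\rho$ of $q_h$. For a fixed classical probability measure $p$, the set $\{w_{\sigma,p}\}$, with $\sigma$ ranging over all density operators, is a convex subset of ${\cal W}_{uc}$. Likewise, for fixed $\sigma$, the set $\{w_{\sigma,p}\}$ with $p$ ranging over all classical probability measures is convex. So $\cal T$ is convex-linear on each of these families. Since $\hat\rho$ is affine, $\hat\rho\circ{\cal T}$ is affine on each of them as well.

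\textbf{Step 1: pure $\rho$.} Let $\rho$ be pure and take $p_1,p_2$ with $p_3=(p_1+p_2)/2$. Then $w_{\rho,p_3}=(w_{\rho,p_1}+w_{\rho,p_2})/2$, and all three have pure quantum marginal $\rho$. By hypothesis each $\hat\rho\circ{\cal T}(w_{\rho,p_s})$ is pure, and convex-linearity gives
\[
\hat\rho\circ{\cal T}(w_{\rho,p_3})=\tfrac12\bigl(\hat\rho\circ{\cal T}(w_{\rho,p_1})+\hat\rho\circ{\cal T}(w_{\rho,p_2})\bigr).
\]
A pure state is an extreme point of the set of density operators, so both terms on the right equal the left-hand side. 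Hence $\hat\rho\circ{\cal T}(w_{\rho,p})$ does not depend on $p$. By Lemma \ref{Lemps}, ${\cal W}_\rho$ is contained in ${\cal W}_{uc}$ when $\rho$ is pure, but only the intersection is needed here anyway.

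\textbf{Step 2: general $\rho$.} Now let $\rho$ be an arbitrary density operator and write its spectral decomposition $\rho=\sum_k \lambda_k\rho_k$ with pure $\rho_k$. This sum may be countably infinite. For a fixed $p$, the maps $w_{\sigma,p}$ form a convex subset of ${\cal W}_{uc}$ that is sequentially closed under pointwise convergence. The reason is that pointwise convergence of $\mathrm{tr}(\sigma_n E)p(X)$ for every effect $E$ yields a weak limit of $\sigma_n$, and the limit is a density operator because $w_{\sigma,p}(X,I)=1$ forces $\mathrm{tr}\,\sigma=1$. Lemma \ref{Lemiclc} therefore applies and gives
\[
{\cal T}(w_{\rho,p})=\sum_k\lambda_k{\cal T}(w_{\rho_k,p}),
\]
so that
\[
\hat\rho\circ{\cal T}(w_{\rho,p})=\sum_k\lambda_k\,\hat\rho\circ{\cal T}(w_{\rho_k,p}).
\]
To pass $\hat\rho$ through the series, evaluate at $(X,E)$ termwise. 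By Step 1 each term is independent of $p$, so the sum is too. This proves the proposition.

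\textbf{Main obstacle.} The delicate point is the closedness of that family needed for Lemma \ref{Lemiclc}. If it proves troublesome, I would avoid infinite sums by truncation. Write $\rho=T_N\rho'_N+(1-T_N)\rho''_N$, where $\rho'_N$ is the normalized finite part of the spectral sum and $\rho''_N$ the normalized remainder. Finite convex-linearity together with Step 1 shows that $\hat\rho\circ{\cal T}(w_{\rho,p})$ differs from a $p$-independent operator by at most $1-T_N$ in each expectation value $\mathrm{tr}(\cdot\,E)$. Letting $N\to\infty$ then gives $p$-independence.
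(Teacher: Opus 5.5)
Your proof is correct and follows essentially the paper's route: extremality of the pure output state when $\rho$ is pure (the paper's $|\langle 3|1\rangle|^2+|\langle 3|2\rangle|^2=2$ computation is an explicit form of this), then the spectral decomposition of $\rho$ combined with Lemma~\ref{Lemiclc} on ${\cal W}_p^{uc}$. The one difference is the closedness check, where you claim a weak limit of the $\sigma_n$ exists, which is not immediate in infinite dimension; the paper avoids this by taking $\rho=\hat\rho(w)$ of the limit, already known to be a probability measure, and verifying $w(A,E)=\mathrm{tr}(\rho E)p(A)$ directly, and your truncation fallback is also valid and bypasses closedness altogether.
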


As shown by Lemma \ref{Lemps}, when $\rho$ is pure, all probability measures 
of ${\cal W}_{\rho}$ are uncorrelated, i.e., 
${\cal W}_{\rho} \cap {\cal W}_{uc}={\cal W}_{\rho}$. We remark that this 
proposition also holds when only hybrid probability measures $w$ such that 
$A \mapsto w(A,I)$ belongs to a given convex set of classical probability 
measures are allowed, e.g., when $w(A,I)=0$ as soon as $A$ is null for a 
reference measure, see Sec.~\ref{Hscrm}.
\begin{proof}
Consider first the case of a pure state $\rho$ of $q_h$. Let $w_1$ and $w_2$ 
be any elements of ${\cal W}_{\rho}$ and $w_3=(w_1+w_2)/2$. This 
probability measure belongs to ${\cal W}_{\rho}$ since 
$\mathrm{tr} (\hat \rho(w_3) E)=\mathrm{tr} (\rho E)$ for any effect $E$. 
Similarly, $tw_1+(1-t)w_2$ belongs to ${\cal W}_{\rho}$ for any $t \in [0,1]$ 
and so ${\cal W}_{\rho}$ is convex. Denote 
$\rho'_s=\hat \rho \circ {\cal T}(w_s)$. By assumption, the states 
$\rho'_s=|s \rangle \langle s |$ are pure, where $|s \rangle$ are normalized 
vectors of $\cal H$. As ${\cal T}(w_{s})(X,E)= \mathrm{tr}(\rho'_s E)$ for 
any effect $E$ and $\cal T$ is convex-linear on ${\cal W}_{\rho}$, which is 
a subset of ${\cal W}_{uc}$, as shown by Lemma \ref{Lemps}, one has 
$\rho'_3=(\rho'_1+\rho'_2)/2$. This implies 
$|\langle 3 | 1 \rangle|^2 + |\langle 3 | 2 \rangle|^2 =2$ which leads to 
$| s \rangle=z_s | 3 \rangle$ with $|z_s|=1$. Thus, $\rho'_1$ and $\rho'_2$ 
are equal to each other and so $\hat \rho \circ {\cal T}(w)$ is the same for 
any $w$ of ${\cal W}_{\rho}$. 

Let $p$ be any probability measure of the classical subsystem $c_h$ of $h$ and 
${\cal W}_p^{uc}$ be the set of uncorrelated probability measures 
$(A,E) \mapsto \mathrm{tr}(\rho E) p(A)$ with $\rho$ any state of $q_h$. It is 
a convex subset of ${\cal W}_{uc}$. Let us show that it is sequentially closed 
with respect to pointwise convergence. Consider any sequence $(w_n)_n$ of 
${\cal W}_p^{uc}$ pointwise converging to a probability measure $w$. Let 
$\rho = \hat \rho (w)$. For any event $A$ and effect $E$, 
$|\mathrm{tr} (\rho E) p(A)-w_n(A,E)|
=p(A) |w(X,E)-w_n(X,E)|$ vanishes as $n$ goes to infinity and so 
$w(A,E)=\mathrm{tr} (\rho E) p(A)$, i.e., $w \in {\cal W}_p^{uc}$.

Define the map $\zeta : \rho \mapsto \hat \rho \circ {\cal T}(w)$, where 
$w : (A,E) \mapsto \mathrm{tr}(\rho E)p(A)$, from the set of density 
operators on $\cal H$ to itself. It may {\it a priori} depend on $p$. For any 
given state $\rho$, the values of $w$ can be written as 
$w(A,E)=\sum_n \lambda_n w_n(A,E)$ where $\lambda_n$ denotes the 
eigenvalues of $\rho$, that are non-negative and sum to unity, and the maps 
$w_n : (A,E) \mapsto \langle n | E | n \rangle p(A)$, with $| n \rangle$ 
denoting the eigenstates of $\rho$, belong to ${\cal W}_p^{uc}$. It follows 
from Lemma \ref{Lemiclc} that, for any effect $E$, 
${\cal T}(w)(X,E)=\sum_n \lambda_n {\cal T}(w_n)(X,E)$ and so, by definition 
of $\zeta$, $\mathrm{tr}(\zeta(\rho)E)=
\sum_n \lambda_n\mathrm{tr}(\rho'_n E)$ with 
$\rho'_n=\hat \rho \circ {\cal T}(w_n)$. As shown above, the states $\rho'_n$ 
are independent of $p$ and so is $\zeta$. For any state $\rho$ of $q_h$, 
$\hat \rho \circ {\cal T}(w)=\zeta(\rho)$ is the same for any 
$w$ of ${\cal W}_{\rho} \cap {\cal W}_{uc}$.
\end{proof}
When the density operator $\rho=|\psi \rangle \langle \psi |$ of the quantum 
system $q_h$ is pure, it follows from Lemma \ref{Lemps} that Eq.\eqref{Gt} 
simplifies into $w : (A,E)\mapsto \langle \psi | E | \psi \rangle p(A)$. Here, 
the classical probability measure $p$ and the unit ray 
$\psi=\{ e^{i \theta} | \psi \rangle | \theta \in  \mathbb{R} \}$ can be 
considered, together, as the state of hybrid system $h$. When only pure 
quantum states are taken into account, the set of probability measures of $h$ 
is ${\cal W}=\cup_{\psi} {\cal W}_{\psi}$where $\psi$ runs over all unit rays 
of $q_h$ and ${\cal W}_{\psi}={\cal W}_{|\psi \rangle \langle \psi |}$. Any set 
${\cal W}_{\psi}$ is convex. Moreover, any convex subset of ${\cal W}$ is a 
subset of a set ${\cal W}_{\psi}$ since $(w_1+w_2)/2$, for instance, does not 
belong to ${\cal W}$ when $w_1 \in {\cal W}_{\psi}$ and 
$w_2 \in {\cal W}_{\psi'}$ with $\psi \neq \psi'$. A transformation $\cal T$ 
obeying the no-signaling condition \eqref{nsc2} is convex-linear on any subset 
${\cal W}_{\psi}$, see the proof of Proposition \ref{Clr}, and so on any convex 
subset of ${\cal W}$. 

The convex-linearity of $\cal T$ on ${\cal W}_{\psi}$ is sufficient to show that 
$\hat \rho \circ {\cal T}(w)$ is the same for any $w \in {\cal W}_{\psi}$, see 
the proof of Proposition \ref{ps}. Consequently, for hybrid dynamics with only 
pure quantum states and a convex set of classical probability measures satisfying 
the no-signaling condition \eqref{nsc2}, any finite-time evolution is described by 
a hybrid state transformation of the form 
$(p,\psi) \mapsto (T_{\psi}(p),\Phi(\psi))$ with $\Phi$ a map from the set of 
unit rays to itself and $T_{\psi}$ a $\psi$-dependent convex-linear map from 
the set of probability measures of $c_h$ to itself. The time evolution of the 
quantum system $q_h$ is not influenced by the classical system $c_h$ whereas 
that of $c_h$ can be influenced by $q_h$. In other words, there can be a 
quantum action but no classical reaction for such dynamics. Condition 
\eqref{nsc2} can be rewritten here as 
$T_{\psi}(p_Y)= \sum_n p_c(B_n) T_{\psi}(p_{B_n})$ for any $\psi$, with 
$p_c(B)=p(X \times B)$ and $p_B : A \mapsto p(A \times B)/p_c(B)$ where 
$B$ is any event of $c$ and $p$ is the probability measure of the classical 
system made up of $c_h$ and $c$. So, for hybrid dynamics with only pure 
quantum states and a convex set of classical probability measures, the 
above form for state transformations and the no-signaling condition 
\eqref{nsc2} are equivalent.

In contrast to the Dirac measures discussed in the previous section, the state of 
a composite quantum system can be pure while those of its subsystems are not. 
If strictly only pure quantum states are allowed, only pure product states are 
possible for composite quantum systems. In this case, condition \eqref{nsc} is 
obviously fulfilled since the probability measure $w_F$ given by Eq.\eqref{wF} is 
equal to $w_{I_q}$ for any effect $F$ such that $w_q(F)>0$. When correlated 
quantum states are considered, and so mixed states for $h$, it follows from 
Proposition \ref{Clr} that the no-signaling conditions \eqref{nsc} and 
\eqref{nsc2} are equivalent and that a transformation $\cal T$ obeying them is 
convex-linear. So, for no-signaling classical-quantum dynamics such that pure 
states of quantum systems remain pure, the time evolution of $q_h$ is not 
influenced by $c_h$ whenever these two systems are initially uncorrelated, see 
Proposition \ref{ps}. 

Pure states of quantum systems remain pure if, for instance, they are assumed 
to evolve unitarily as, e.g., in pilot-wave approaches \cite{OM}. When only pure 
quantum states are taken into account, finite-time evolutions of the form 
discussed above are possible and the quantum degrees of freedom can influence 
the classical ones. But, if all quantum states are allowed then the only 
no-signaling hybrid dynamics with unitary evolution for pure states are those for 
which the classical and quantum subsystems evolve independently from each 
other when they are initially uncorrelated, as shown by the proposition below.
\begin{Prop}\label{Ue}
Let ${\cal W}_{uc}$ be the set of uncorrelated probability measures of a hybrid 
system $h$ and $\cal T$ be any probability measure transformation of $h$ that 
is convex-linear on any convex subset of ${\cal W}_{uc}$.

If $\cal T$ is such that, for any $w$ for which $\hat \rho(w)$ is pure, 
$\hat \rho \circ {\cal T}(w)=U\hat \rho (w) U^\dag$ where $U$ is an isometry 
on $\cal H$ then, for any $w$ of ${\cal W}_{uc}$, ${\cal T}(w)$ is given by 
${\cal T}(w)(A,E)=\mathrm{tr}(U\hat \rho (w) U^\dag E)p(A)$ where the 
probability measure $p$ does not depend on $\hat \rho(w)$. 
\end{Prop}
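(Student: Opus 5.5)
For $w\in{\cal W}_{uc}$ with classical marginal $p$, write $w=w_{p,\rho}:(A,E)\mapsto \mathrm{tr}(\rho E)p(A)$. The plan is to show two things: the transformed classical marginal ${\cal T}(w_{p,\rho})(\cdot,I)$ does not depend on $\rho$, and ${\cal T}(w_{p,\rho})$ is itself uncorrelated. Proposition~\ref{ps} already gives the quantum part. Indeed, the unitary hypothesis makes pure states stay pure, so $\hat\rho\circ{\cal T}(w_{p,\rho})=\zeta(\rho)$ is independent of $p$. The proof of Proposition~\ref{ps} also gives $\mathrm{tr}(\zeta(\rho)E)=\sum_n\lambda_n\langle n|U^\dag E U|n\rangle$, so $\zeta(\rho)=U\rho U^\dag$.

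\emph{Step 1: pure states.} Take $\rho=|\psi\rangle\langle\psi|$. Then $\hat\rho({\cal T}(w_{p,\rho}))=U|\psi\rangle\langle\psi|U^\dag$ is pure. By Lemma~\ref{Lemps}, ${\cal T}(w_{p,\rho})$ is uncorrelated, so
\[
{\cal T}(w_{p,\rho})(A,E)=\langle\psi|U^\dag EU|\psi\rangle\,p'_{\psi}(A),
\]
where $p'_\psi$ is a classical probability measure that may depend on $p$ and $\psi$.

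\emph{Step 2: $p'_\psi$ does not depend on $\psi$.} This is the key step and the main obstacle. Fix $p$ and work inside the convex set ${\cal W}_p^{uc}$, on which ${\cal T}$ is convex-linear. Take two unit vectors $|\psi_1\rangle,|\psi_2\rangle$. There are two cases.

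If they are orthogonal, mixing them in two different ways gives the same density operator. With $|\pm\rangle=(|\psi_1\rangle\pm|\psi_2\rangle)/\sqrt2$,
\[
\tfrac12(|\psi_1\rangle\langle\psi_1|+|\psi_2\rangle\langle\psi_2|)=\tfrac12(|+\rangle\langle+|+|-\rangle\langle-|).
\]
Convex-linearity then gives
\[
\tfrac12\big(P_1\,p'_{\psi_1}+P_2\,p'_{\psi_2}\big)=\tfrac12\big(P_+\,p'_{+}+P_-\,p'_{-}\big)
\]
as maps of $(A,E)$, where $P_\phi(E)=\langle\phi|U^\dag EU|\phi\rangle$. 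Since $U$ is an isometry, the vectors $U|\psi_1\rangle,U|\psi_2\rangle$ are orthonormal. Set $E=U|\psi_1\rangle\langle\psi_1|U^\dag$: the left side becomes $\tfrac12 p'_{\psi_1}(A)$ and the right side $\tfrac14(p'_++p'_-)(A)$. Doing the same with $\psi_2$ gives $p'_{\psi_1}=p'_{\psi_2}=(p'_++p'_-)/2$.

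If they are not orthogonal, complete the pair within their span. For $\psi_1\ne\psi_2$ as rays, choose $\phi\perp\psi_1$ in $\mathrm{span}\{\psi_1,\psi_2\}$ and any $\chi$ orthogonal to both $\psi_2$ and $\phi$; then applying the orthogonal case along the chain $\psi_1\to\phi\to\chi\to\psi_2$ gives equality. If that chain is awkward, use a direct two-dimensional argument instead: in a $2$-dimensional space every pure state is orthogonal to some pure state, and every pair of states is linked through their orthogonal partners via the $|\pm\rangle$ decomposition. That argument applies once $\dim{\cal H}\ge2$, and the case $\dim{\cal H}=1$ is trivial. Call the common measure $p'$. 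It may depend on $p$ but not on $\psi$.

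\emph{Step 3: mixed $\rho$.} Write $\rho=\sum_n\lambda_n|n\rangle\langle n|$ in its eigenbasis. As in the proof of Proposition~\ref{ps}, ${\cal W}_p^{uc}$ is sequentially closed, so Lemma~\ref{Lemiclc} gives
\[
{\cal T}(w_{p,\rho})(A,E)=\sum_n\lambda_n\langle n|U^\dag EU|n\rangle\,p'(A)=\mathrm{tr}(U\rho U^\dag E)\,p'(A).
\]
Here $p'$ depends only on $p$, which is the claimed form. The delicate point is the non-orthogonal linking in Step 2, and I would try the two-dimensional $|\pm\rangle$ argument first.
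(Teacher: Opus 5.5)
Your orthogonal-pair computation is correct: you take two decompositions of $\frac12\Pi_V$ and test them on $E=U|\psi_1\rangle\langle\psi_1|U^\dagger$. Steps 1 and 3 are also correct. Your chain also works when $\dim{\cal H}\ge 3$; any $\chi\perp\mathrm{span}\{\psi_1,\psi_2\}$ already links $\psi_1$ and $\psi_2$ directly.

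\textbf{The gap.} The gap is the non-orthogonal case when $\dim{\cal H}=2$.
\begin{enumerate}[label=(\alph*)]
\item There your $\chi$ would have to be orthogonal to all of ${\cal H}$, so the chain does not exist.
\item The fallback (``linked through their orthogonal partners via the $|\pm\rangle$ decomposition'') is not yet an argument. In a qubit each ray has exactly one orthogonal partner, so orthogonality links alone never take $\psi_1$ beyond $\psi_1^\perp$.
\item Your computation as written only yields $p'_{\psi_1}=p'_{\psi_2}=(p'_++p'_-)/2$. It does not yield $p'_{\psi_1}=p'_+$.
\end{enumerate}

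\textbf{How to close it with your own tool.} Let $\phi$ be an arbitrary unit vector, and let $V$ be a two-dimensional subspace containing $\psi_1$ and $\phi$. Take an orthonormal basis $\{\psi_1,\psi_2\}$ of $V$; the orthogonal case gives $p'_{\psi_1}=p'_{\psi_2}=:q$. Now also decompose $\frac12\Pi_V=\frac12(|\phi\rangle\langle\phi|+|\phi^\perp\rangle\langle\phi^\perp|)$ and evaluate both sides at $E=U|\phi\rangle\langle\phi|U^\dagger$. This gives $\frac12\bigl(|\langle\phi|\psi_1\rangle|^2+|\langle\phi|\psi_2\rangle|^2\bigr)q=\frac12 p'_\phi$, that is $p'_\phi=q$. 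This works in one step and in any dimension, so the chain is no longer needed.

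An equivalent repair: rerun your computation with $(|+\rangle,|-\rangle)$ as the base pair. This gives $p'_+=p'_-=(p'_{\psi_1}+p'_{\psi_2})/2=p'_{\psi_1}$, so $p'$ agrees on rays of overlap $1/2$. Any two qubit rays share such a ray: take the Bloch vector orthogonal to both of theirs.

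\textbf{Comparison with the paper.} The paper never links non-orthogonal rays directly.
\begin{enumerate}[label=(\alph*)]
\item It obtains $p'_1=p'_2$ for orthonormal pairs from the non-degenerate state $\frac14(|1\rangle\langle 1|+|2\rangle\langle 2|)+\frac12|\psi\rangle\langle\psi|$. The requirement that $\tau(A)$ stay diagonal in $|\pm\rangle$ forces this equality; this is close in spirit to your equal-weight version.
\item Via Lemma~\ref{Lemiclc}, it concludes that ${\cal T}$ maps every element of ${\cal W}_p^{uc}$ to an uncorrelated measure with quantum part $U\hat\rho(w)U^\dagger$.
\item It then invokes Proposition~\ref{uc}. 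Its midpoint identity $\mathrm{tr}((\rho'_1-\rho'_2)E)(p'_1-p'_2)(A)=0$ forces the classical outputs to coincide whenever the quantum outputs $U\rho_sU^\dagger$ differ.
\end{enumerate}
Your route, once patched as above, is self-contained and avoids the forward reference to Proposition~\ref{uc}. The paper's route trades the geometric linking for that more general structural result.
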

\begin{proof}
Let $p$ be any probability measure of the classical subsystem $c_h$ of $h$ and 
${\cal W}_p^{uc}$ be the set of uncorrelated probability measures 
$(A,E) \mapsto \mathrm{tr}(\rho E) p(A)$ with $\rho$ any state of the quantum 
subsystem $q_h$ of $h$. It is a convex subset of ${\cal W}_{uc}$ and is 
sequentially closed with respect to pointwise convergence, see the proof of 
Proposition \ref{ps}. For any normalized vector $|\psi \rangle$ of $\cal H$, we 
denote as $p'_{\psi}$ the probability measure 
$A \mapsto {\cal T}(w_{\psi})(A,I)$ where $w_{\psi}$ is the element of 
${\cal W}_p^{uc}$ such that 
$\hat \rho(w_{\psi})=| \psi \rangle \langle \psi |$. It follows from the properties 
of $\cal T$, Lemma \ref{Lemps} and Lemma \ref{Lemiclc} that, for any 
$w$ of ${\cal W}_p^{uc}$, ${\cal T}(w) : (A,E) \mapsto \sum_n \lambda_n 
\langle n | U^{\dag}E U | n \rangle p'_n(A)$ where $\lambda_n$ and 
$| n \rangle$ denote, respectively, the eigenvalues and eigenvectors of 
$\hat \rho(w)$. The values of ${\cal T}(w)$ can be rewritten as 
${\cal T}(w)(A,E)=\mathrm{tr} (\tau(A) U^{\dag}E U)$ where $\tau(A)$ is a 
positive trace-class operator on $\cal H$ whose eigenvectors do not depend 
on event $A$.

Let $|1 \rangle$ and $|2 \rangle$ be any orthonormal vectors in $\cal H$. 
Denote as $w_1$, $w_2$ and $w$ the elements of ${\cal W}_p^{uc}$ such 
that $\hat \rho (w_n)=| n \rangle \langle n |$ and 
$\hat \rho (w)=(|1 \rangle \langle 1|+|2 \rangle \langle 2|)/4
+|\psi \rangle \langle \psi|/2$ where 
$|\psi \rangle=(|1 \rangle+|2 \rangle)/\sqrt{2}$. The properties of $\cal T$ 
and Lemma \ref{Lemps} give 
${\cal T}(w) : (A,E) \mapsto \mathrm{tr} (\tau(A) U^{\dag}E U)$
with $\tau(A) = ((p'_1(A)+p'_{\psi}(A))| 1 \rangle\langle 1 |
+(p'_2(A)+p'_{\psi}(A))| 2 \rangle\langle 2 |
+p'_{\psi}(A)| 1 \rangle\langle 2 |+p'_{\psi}(A)| 2 \rangle\langle 1 |)/4$. As 
seen above, the eigenvectors of $\tau(A)$ do not depend on $A$. With $A=X$, 
one finds the eigenvectors $|  \pm \rangle=| 1 \rangle \pm | 2 \rangle$ which 
are orthogonal to each other. It follows from 
$\langle + | \tau(A) | - \rangle=0$ for any event $A$ that $p'_1=p'_2$. 
Consequently, for any $w$ of ${\cal W}_{uc}$, ${\cal T}(w)$ is uncorrelated 
and $\hat \rho \circ {\cal T}(w)=U \hat \rho (w) U^\dag$, see the expression 
in the previous paragraph. So, due to Proposition \ref{uc}, the map 
$A \mapsto {\cal T}(w)(A,I)$ is the same for any $w$ of ${\cal W}_p^{uc}$.
\end{proof}
\section{No genuine classical-quantum interactions without correlations}
\label{Ngcqi}
In the previous sections, classical-quantum approaches were discussed, in which 
only specific uncorrelated probability measures are allowed. It has been shown 
that, for no-signaling dynamics, there can be no quantum reaction on classical 
trajectories and no classical reaction on pure-state quantum trajectories. The 
proofs of these results rely on the particular assumptions considered in these 
cases. When the set of permitted hybrid probability measures contains all 
uncorrelated ones, the following proposition can be shown for probability 
measure transformations that do not generate correlations between the classical 
and the quantum degrees of freedom. We denote as ${\cal W}_{uc}$ the set of 
uncorrelated probability measures of hybrid system $h$, as ${\cal W}_p^{uc}$ 
the set of probability measures $w$ of ${\cal W}_{uc}$ such that 
$A \mapsto w(A,I)$ is equal to the probability measure $p$ of $c_h$ and as 
${\cal W}_{\rho}^{uc}$ the set of probability measures $w$ of ${\cal W}_{uc}$ 
such that $\hat \rho(w)$ is equal to the state $\rho$ of $q_h$. Note that any 
convex subset of ${\cal W}_{uc}$ is a subset of a set ${\cal W}_p^{uc}$ or of 
a set ${\cal W}_{\rho}^{uc}$. 
\begin{Prop}\label{uc}
Let $\cal T$ be any probability measure transformation of a hybrid system $h$ 
that is convex-linear on any convex subset of the set ${\cal W}_{uc}$ of $h$. 
For any probability measure $w$ of $h$, $\hat p_{{}_{\cal T}}(w)$ denotes 
the classical probability measure $A \mapsto {\cal T}(w)(A,I)$ and 
$\hat \rho_{{}_{\cal T}}(w)=\hat \rho \circ {\cal T}(w)$.

If $\cal T$ is  such that ${\cal T}[{\cal W}_{uc}] \subseteq {\cal W}_{uc}$ 
then, on any set ${\cal W}_p^{uc}$ (${\cal W}_{\rho}^{uc}$), 
$\hat p_{{}_{\cal T}}$ is constant  or $\hat \rho_{{}_{\cal T}}$ is constant.
\end{Prop}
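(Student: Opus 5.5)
Fix a classical probability measure $p$ and consider the convex set ${\cal W}_p^{uc}$, whose elements are parametrized by the density operator $\rho$ through $w_\rho:(A,E)\mapsto \mathrm{tr}(\rho E)p(A)$. The map $\rho\mapsto w_\rho$ is affine. By hypothesis ${\cal T}(w_\rho)$ is uncorrelated, so we may write ${\cal T}(w_\rho)(A,E)=\mathrm{tr}(\rho'(\rho)E)\,p'(\rho)(A)$, with $\rho'=\hat\rho_{{}_{\cal T}}(w_\rho)$ and $p'=\hat p_{{}_{\cal T}}(w_\rho)$. Since ${\cal T}$ is convex-linear on ${\cal W}_p^{uc}$, the map $\rho\mapsto {\cal T}(w_\rho)$ is affine. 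Taking $E=I$ shows that $p'$ is affine in $\rho$, and taking $A=X$ shows that $\rho'$ is affine in $\rho$. The task then reduces to the following: if $a(\rho)\otimes b(\rho)$ is affine, with $a,b$ affine maps into probability measures and states, then $a$ or $b$ is constant.

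Take any two states $\rho_1,\rho_2$, set $\rho_3=(\rho_1+\rho_2)/2$, and write $p'_s,\rho'_s$ for the corresponding images. Affinity of the product and of each factor gives, for all $A$ and $E$,
\begin{equation*}
\tfrac14\bigl(p'_1+p'_2\bigr)(A)\,\mathrm{tr}\bigl((\rho'_1+\rho'_2)E\bigr)=\tfrac12\bigl(p'_1(A)\mathrm{tr}(\rho'_1E)+p'_2(A)\mathrm{tr}(\rho'_2E)\bigr).
\end{equation*}
Expanding, this becomes $(p'_1-p'_2)(A)\,\mathrm{tr}((\rho'_1-\rho'_2)E)=0$ for all $A,E$. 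Hence for every pair of states either $p'_1=p'_2$ or $\rho'_1=\rho'_2$.

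Next comes the step I expect to be the main obstacle, which is passing from this pairwise dichotomy to a global one. Suppose neither map is constant. Then there are states with $p'_a\neq p'_b$, which forces $\rho'_a=\rho'_b$, and states with $\rho'_c\neq\rho'_d$, which forces $p'_c=p'_d$. I would derive a contradiction using the standard argument that a set cannot be the union of two proper ``equivalence-type'' relations in this manner. Concretely, consider the pairs $(a,c)$, $(a,d)$, $(b,c)$, $(b,d)$. Since $\rho'_c\neq\rho'_d$, at least one of $\rho'_a\neq\rho'_c$ or $\rho'_a\neq\rho'_d$ holds, say $\rho'_a\neq\rho'_c$, which gives $p'_a=p'_c$. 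Similarly $\rho'_b=\rho'_a\neq\rho'_c$ gives $p'_b=p'_c$, so $p'_a=p'_b$, a contradiction. The reasoning uses only the pairwise dichotomy together with transitivity of equality, so I would just need to handle the case split carefully (using $\rho'_d$ if $\rho'_a=\rho'_c$).

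The argument for ${\cal W}_\rho^{uc}$ is symmetric. Fix $\rho$ and parametrize by $p$ via $w_p:(A,E)\mapsto \mathrm{tr}(\rho E)p(A)$, which is affine in $p$. The same identity with midpoints $p_3=(p_1+p_2)/2$ yields the pairwise dichotomy, and the same combinatorial step concludes that either $\hat p_{{}_{\cal T}}$ or $\hat\rho_{{}_{\cal T}}$ is constant on ${\cal W}_\rho^{uc}$. No use of Lemma \ref{Lemiclc} appears to be needed, since only midpoint convexity is used.
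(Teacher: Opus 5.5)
Your proof is correct and follows the paper's argument. The midpoint computation gives $(p'_1-p'_2)(A)\,\mathrm{tr}((\rho'_1-\rho'_2)E)=0$, and your contradiction step for passing from the pairwise dichotomy to a global one is just the contrapositive of the paper's direct step: the paper fixes $w_1,w_2$ with $\hat\rho_{{}_{\cal T}}(w_1)\neq\hat\rho_{{}_{\cal T}}(w_2)$ and shows that every $w$ satisfies $\hat p_{{}_{\cal T}}(w)=\hat p_{{}_{\cal T}}(w_1)=\hat p_{{}_{\cal T}}(w_2)$.
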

This proposition also holds when only hybrid probability measures $w$ such 
that $A \mapsto w(A,I)$ belongs to a given convex set of classical probability 
measures are considered, e.g., when $w(A,I)=0$ as soon as $A$ is null for a 
reference measure, see Sec.~\ref{Hscrm}. By contrast with Proposition 
\ref{uc}, when only Dirac measures are allowed for classical systems, both 
$\hat p_{{}_{\cal T}}$ and $\hat \rho_{{}_{\cal T}}$ can vary as the initial 
quantum subsystem state is fixed, see Sec.~\ref{Nqb}, and when only pure 
states are allowed for quantum systems, both $\hat p_{{}_{\cal T}}$ and 
$\hat \rho_{{}_{\cal T}}$ can vary as the initial classical subsystem probability 
measure is fixed, see Sec.~\ref{Ncb}.
\begin{proof}
Let $w_1$ and $w_2$ be any elements of ${\cal W}_p^{uc}$ and 
$w_3=(w_1+w_2)/2$ which belongs to ${\cal W}_p^{uc}$. The convex-linearity 
of ${\cal T}$ on ${\cal W}_p^{uc}$ gives 
${\cal T}(w_3)=({\cal T}(w_1)+{\cal T}(w_2))/2$ and so 
$p'_3=(p'_1+p'_2)/2$, where $p'_s=\hat p_{{}_{\cal T}}(w_s)$, and 
$\rho'_3=(\rho'_1+\rho'_2)/2$, where $\rho'_s=\hat \rho_{{}_{\cal T}}(w_s)$. 
By assumption, ${\cal T}(w_s)$ is given by 
${\cal T}(w_s)(A,E)=\mathrm{tr}(\rho'_sE) p'_s(A)$. Consequently, the above 
equalities lead to $\mathrm{tr}((\rho'_1-\rho'_2)E) (p'_1-p'_2)(A)=0$ for any 
event $A$ and effect $E$ and so $p'_1=p'_2$ or $\rho'_1=\rho'_2$.

Assume that $p$ is such that there exist elements $w_1$ and $w_2$ of 
${\cal W}_p^{uc}$ for which 
$\hat \rho_{{}_{\cal T}}(w_1) \neq \hat \rho_{{}_{\cal T}}(w_2)$. It follows 
from the above that $\hat p_{{}_{\cal T}}(w_1)=\hat p_{{}_{\cal T}}(w_2)$.
Denote ${\cal V} = \{w \in {\cal W}_p^{uc} : 
\hat \rho_{{}_{\cal T}}(w)=\hat \rho_{{}_{\cal T}}(w_1) \}$. Any 
$w \in \cal V$ fulfills $\hat p_{{}_{\cal T}}(w)=\hat p_{{}_{\cal T}}(w_2)$ 
since $\hat \rho_{{}_{\cal T}}(w) \neq \hat \rho_{{}_{\cal T}}(w_2)$. Any 
$w \notin \cal V$ fulfills $\hat p_{{}_{\cal T}}(w)=\hat p_{{}_{\cal T}}(w_1)$ 
since $\hat \rho_{{}_{\cal T}}(w) \neq \hat \rho_{{}_{\cal T}}(w_1)$. 
Thus, in this case, $\hat p_{{}_{\cal T}}(w)$ is the same for any 
$w \in {\cal W}_p^{uc}$. In the alternative case, $\hat \rho_{{}_{\cal T}}$ is 
constant on ${\cal W}_p^{uc}$.

The same arguments hold with ${\cal W}_p^{uc}$ replaced by 
${\cal W}_{\rho}^{uc}$.\end{proof}

To discuss the implications of Proposition \ref{uc}, let us consider first 
time-translation symmetric dynamics of $h$. Such a dynamics is described by a 
time-parameterized family of probability measure transformations ${\cal T}(t)$ 
such that ${\cal T}(t'+t)={\cal T}(t') \circ {\cal T}(t)$ for any positive times $t$ 
and $t'$ and ${\cal T}(0) : w \mapsto w$. Consider one that is no-signaling and 
does not generate correlations between $c_h$ and $q_h$, i.e., such that, for any 
time $t$, ${\cal T}(t)(w)$ is uncorrelated whenever $w$ is. Let $w$ be any 
uncorrelated probability measure with probability measure $p_0$ for $c_h$ and 
state $\rho_0$ for $q_h$. It seems natural to assume that, for short enough time 
$t$, the probability measure $p_t=\hat p_{{}_{{\cal T}(t)}}(w)$ of $c_h$ 
depends on $p_0$ and the state $\rho_t=\hat {\rho}_{{}_{{\cal T}(t)}}(w)$ of 
$q_h$ depends on $\rho_0$. In other words, for any $p_0$ and $\rho_0$, 
$\hat p_{{}_{{\cal T}(t)}}$ is not constant on ${\cal W}_{{\rho}_0}^{uc}$ and 
$\hat \rho_{{}_{{\cal T}(t)}}$ is not constant on ${\cal W}_{p_0}^{uc}$. It 
follows from this assumption and Proposition \ref{uc} that, for any time $t$, $p_t$ 
does not depend on $\rho_0$ and $\rho_t$ does not depend on $p_0$. In other 
words, the two subsystems of $h$ evolve independently of one another. 
Consequently, for no-signaling time-translation symmetric dynamics, as soon as 
one subsystem influences the other, correlations are generated between them. 

In the Appendix, we present a measure of the correlations between a discrete 
classical system and a quantum system. It is upperbounded by twice the von 
Neumann entropy of the state of $q_h$. So, when this quantum system is 
bipartite with finite-dimensional Hilbert space, it can be shown that there is a 
tradeoff between the quantum entanglement of its two parts and its correlations 
with $c_h$ \cite{to1,to2,to3,to4}. Thus, genuine classical-quantum interactions 
have a detrimental impact on this entanglement. Convex-linear transformations 
$\cal T$ describing independent classical and quantum subsystems can be written 
explicitly for a hybrid system $h$ with reference measure $\mu$. In this case, the 
probability measures $w$ of $h$ reads as 
$w : (A,E) \mapsto \int_A \mathrm{tr}(\omega(x) E) d\mu(x)$ where $\omega$ is 
a $\mu$-integrable map from $X$ to the set of positive trace-class operators on 
$\cal H$ such that $\mathrm{tr} \int  \omega d\mu=1$, see Sec.~\ref{Hscrm}. If 
$w$ is uncorrelated, one has $\omega : x \mapsto f(x)\rho$ with $\rho$ the state 
of $q_h$ and $f$ the probability density function of $c_h$. Consider a 
transformation $\cal T$ given by 
$${\cal T}(w) : (A,E) \mapsto \sum_{\alpha} \int k(A,x)
\mathrm{tr}\left(L_{\alpha}\omega(x) L_{\alpha}^\dag E \right)  d\mu(x), $$ 
where the map $k : {\cal A}\times X \rightarrow \mathbb{R}^+$ is a Markov 
kernel \cite{MK} such that $x \mapsto k(A,x)$ vanishes $\mu$-almost everywhere 
when $A$ is $\mu$-null and the bounded operators $L_{\alpha}$ are such that 
$\sum_{\alpha} L_{\alpha}^\dag L_{\alpha}=I$ \cite{PRA}. If $w$ is 
uncorrelated, ${\cal T}(w)$ also is, with the probability measure 
$\hat p_{{}_{{\cal T}}}(w) : A \mapsto \int k(A,x) f(x) d\mu(x)$ of $c_h$, that 
depends only on $f$, and the state 
$\hat \rho_{{}_{{\cal T}}}(w)
=\sum_{\alpha} L_{\alpha} \rho L_{\alpha}^\dag$ of $q_h$, that depends only 
on $\rho$.

Convex-linear probability measure transformations that do not generate 
classical-quantum correlations do not necessarily describe independent classical 
and quantum subsystems. For a hybrid system $h$ with reference measure 
$\mu$, an example of convex-linear transformation ${\cal T}$ such that, on any 
set ${\cal W}_p^{uc}$, $\hat \rho_{{}_{\cal T}}$  is constant but not 
$\hat p_{{}_{\cal T}}$ is given by 
$${\cal T}(w) : (A,E) \mapsto \mathrm{tr}(\rho' E)
\sum_{n=1}^N w(X,E_n)\frac{\mu(A \cap A_n)}{\mu(A_n)} , $$ where $\rho'$ 
is any density operator on $\cal H$, the $N$ effects $E_n$ sum to $I$ and the 
$N$ pairwise disjoint events $A_n$ are such that $\mu(A_n)$ is finite. Such a 
transformation corresponds to the following protocol. The measurement 
characterized by effects $E_n$ is performed on $q_h$, the outcome of this 
measurement is stored using $c_h$ and the state of $q_h$ is always changed 
into $\rho'$. Similarly, a convex-linear transformation ${\cal T}$ such that, on 
any set ${\cal W}_{\rho}^{uc}$, $\hat p_{{}_{\cal T}}$ is constant but not 
$\hat \rho_{{}_{\cal T}}$ can be defined as $${\cal T}(w) : (A,E) \mapsto 
p'(A)\sum_{n=1}^N w(A_n,I) \langle n | E | n \rangle, $$ where $p'$ is any 
probability measure of $c_h$, $A_n$ denotes pairwise disjoint events such that 
$\cup_{n=1}^N A_n =X$ and $| n \rangle$ denotes orthonormal vectors in 
$\cal H$. Here, the measurement characterized by events $A_n$ is performed 
on $c_h$, the outcome of this measurement is stored using $q_h$ and the 
probability measure of $c_h$ is changed into $p'$. 
\section{Conclusion}\label{C}
We have formulated a dynamical no-signaling condition for hybrid approaches 
and shown that it has important consequences for classical-quantum interactions. 
This requirement means that, when two systems evolve independently from 
each other, the outcome probabilities of a measurement made on one system 
are not affected by a measurement performed earlier on the other. We have 
seen that analogous no-signaling conditions are satisfied for usual classical 
and quantum bipartite systems and that classical information can be 
superluminally transmitted with certainty between independent systems when 
the condition is violated. In all the considered cases, the proposed requirement 
is equivalent to the convex-linearity of the probability measure transformations 
describing finite-time evolutions.

The implications of the dynamical no-signaling condition for classical-quantum 
interactions depend on the hybrid approach used. For hybrid dynamics with 
classical trajectories, i.e., with only Dirac probability measures for classical 
systems, it implies that the classical degrees of freedom can influence the 
quantum ones but the latter cannot react on the former. The absence of 
quantum reaction can also be shown when classical localized states can only 
be described approximately, as, e.g., for hybrid approaches with classical 
probability density functions. When only pure quantum states are taken into 
account, the classical and the quantum degrees of freedom are also necessarily 
uncorrelated. In this case, it is the absence of classical reaction that follows 
from the dynamical no-signaling condition. 

For descriptions with all hybrid probability measures and those with classical 
probability density functions, the following results have been obtained. First, 
the dynamical no-signaling requirement is fulfilled with a hybrid ancillary system 
as soon as it holds with a classical or a quantum ancillary. For no-signaling 
hybrid dynamics such that pure states of quantum systems remain pure, there 
is no classical reaction when the classical and the quantum degrees of freedom 
are initially uncorrelated, whether or not the latter are in a pure state. For 
no-signaling dynamics that do not generate classical-quantum correlations, 
there are no genuine interactions between the classical and quantum parts of 
a hybrid system. More precisely, the probability measure of the classical 
subsystem or the state of the quantum one does not depend on the initial 
state of the quantum subsystem, and similarly with the initial probability 
measure of the classical one.
\begin{appendix}
\section*{Appendix: Quantifying classical-quantum correlations}
Basic requirements for a measure of classical-quantum correlations are that it 
must vanish for uncorrelated hybrid states and that it cannot increase under 
convex-linear hybrid transformations describing independent classical and quantum 
systems \cite{PRA1}. We restrict ourselves to the case of a hybrid system $h$ 
with a discrete classical subsystem $c_h$, i.e., the sample space $X$ is a subset 
of $\mathbb{N}$ and $\cal A$ is the set of all subsets of $X$. In this case, the 
counting measure is the natural reference measure and a hybrid probability 
measure $w$ determines a unique hybrid state $\omega$ such that 
$w : (A,E) \mapsto \sum_{x \in A} \mathrm{tr}(\omega(x)E)$, see 
Sec.~\ref{Hscrm}. As a measure of correlations between $c_h$ and $q_h$, we 
propose the classical-quantum mutual information 
$$I(\omega)= \sum_{x \in X} \mathrm{tr} (\omega_x \log \omega_x) 
- \sum_{x \in X} f_x \log f_x - \mathrm{tr} (\rho \log \rho) , $$  
where the argument $x$ is written as an index, 
$f : x \mapsto \mathrm{tr} \omega_x$ is the probability density function of $c_h$ 
and $\rho=\sum_{x \in X} \omega_x$ is the state of $q_h$ \cite{PRA}. It clearly 
vanishes for any uncorrelated hybrid state $\omega : x \mapsto f_x \rho$ and is 
equal to the Holevo quantity of the ensemble 
$\{ (f_x, \omega_x/f_x) \}_{x|f_x>0}$ \cite{Ho} and to the quantum mutual 
information 
$S(\hat \omega \Vert \rho \otimes \sum_{x \in X}  f_x |x \rangle \langle x |)$ 
of the effective quantum state $\hat \omega = 
\sum_{x \in X} \omega(x) \otimes |x \rangle \langle x |$ on 
${\cal H} \otimes {\cal H}'$, where $S$ is the quantum relative entropy, 
${\cal H}'$ is a Hilbert space with dimension equal to the cardinality of $X$ and 
$| x \rangle$ denote the elements of an orthonormal basis of ${\cal H}'$. As $S$ 
does not increase under any positive trace-preserving linear map from the set of 
trace-class operators on ${\cal H} \otimes {\cal H}'$ to itself applied to both its 
arguments provided they are positive \cite{MHR}, the measure $I$ meets the 
above mentioned monotonicity requirement, see the proof below. It follows from 
Araki-Lieb inequality \cite{AL} that $I(\omega)$ cannot exceed twice the von 
Neumann entropy of $\rho$. 
\begin{proof}
Consider any hybrid system $h$ with discrete classical subsystem $c_h$ and any 
convex-linear probability measure transformation $\cal T$ of $h$. To any such 
transformation corresponds a convex-linear map from the set of states of $h$ to 
itself. This map can be extended to a bounded linear map $T$ from the set 
$\Omega$ of all maps $\omega$ from $X$ to the set of trace-class operators on 
$\cal H$ such that 
$\Vert \omega \Vert_{\Sigma}=\sum_{x \in X} \Vert \omega(x) \Vert$ is finite, 
where $\Vert \cdot \Vert$ denotes the trace norm, equipped with the norm 
$\Vert \cdot \Vert_{\Sigma}$, to itself. In the sums below, $x$, or $x'$, runs over 
$X$ when no set is specified. By construction, if $\omega(x)$ is positive for any 
$x$ then $T(\omega)(x)$ is positive for any $x$ and 
$\sum_{x} \mathrm{tr} T(\omega)(x)
= \sum_{x} \mathrm{tr} \omega(x)$ \cite{PRA}. 

Let ${\cal H}'$ be a Hilbert space with dimension equal to the cardinality of $X$ 
and $\{ | x \rangle \}_{x \in X}$ be an orthonormal basis of ${\cal H}'$. We build 
a linear map $\hat T$ from the set of trace-class operators on 
${\cal H} \otimes {\cal H}'$, which is a Banach space with the trace norm \cite{BS}, 
to itself as follows. Let $M$ be any such trace-class operator and 
$\omega_M : x \mapsto \mathrm{tr}_{{\cal H}'} 
( M I \otimes |x \rangle \langle x |)$ where $\mathrm{tr}_{{\cal H}'}$ denotes the 
partial trace with respect to ${\cal H}'$. For any $x \in X$, $\omega_M(x)$ is 
trace-class by construction and is positive when $M$ is since 
$\langle \psi | \omega_M(x) |\psi \rangle
=\langle \psi |\langle x | M |\psi \rangle|x \rangle$ for any 
$|\psi \rangle \in {\cal H}$. Any trace-class operator $M$ can be written as 
$M=P_1-P_2+iP_3-iP_4$ where the operators $P_r$ are positive with finite trace. 
They fulfill $P_1-P_2=(M+M^{\dag})/2$, $P_3-P_4=(M-M^{\dag})/2i$ and 
$\Vert P_r-P_{r+1} \Vert=\mathrm{tr}(P_r+P_{r+1})$ where $r=1$ or $3$ 
\cite{BS} and so, as $\omega_{P_r}(x)$ is positive for any $x$, see above, 
$\Vert \omega_M \Vert_{\Sigma} \le 
\sum_{x,r} \mathrm{tr} (P_r I \otimes |x \rangle \langle x |) 
\le 2 \Vert M \Vert$ which implies that $\omega_M \in \Omega$. Let
$M_n=\sum_{x \in X_n} T(\omega_M)(x) \otimes |x \rangle \langle x |$ and 
$I_n=\sum_{x \in X_n} \Vert T(\omega_M)(x) \Vert$ where 
$X_n=\{ x \in X | x \le n \}$, which are such that 
$\lim_{n \rightarrow \infty} I_n=\Vert T(\omega_M) \Vert_{\Sigma}$ and 
$M_n$ is trace-class, as $\Vert M_n \Vert \le I_n$. Since 
$\Vert M_m-M_n \Vert \le |I_m-I_n|$ for any integers $m$ and $n$, 
$(M_n)_n$ is a Cauchy sequence and we define $\hat T : M \mapsto 
\sum_{x} T(\omega_M)(x) \otimes |x \rangle \langle x |$ where the sum 
converges in trace norm.

Consider any positive trace-class operator $M$ on ${\cal H} \otimes {\cal H}'$. 
We have seen above that $\omega_M(x)$ is positive for any $x$ and so that 
$T(\omega_M)(x)$ is positive for any $x$ which implies that $\hat T (M)$ is 
positive. Moreover, it follows from the properties of $T$ and the definition of 
$\omega_M$ that $\mathrm{tr} \hat T(M)
=\sum_x \mathrm{tr} T(\omega_M)(x)=\mathrm{tr} M$. In conclusion, $\hat T$ 
is a positive trace-preserving linear map and so 
$S(M \Vert N) \ge S(\hat T(M) \Vert \hat T(N))$ for any positive trace-class 
operators $M$ and $N$ \cite{MHR}. In particular, one has, for any state 
$\omega$ of $h$, 
$S(\hat T (\hat \omega) \Vert \hat T(\rho \otimes \rho')) \le I(\omega)$ 
where $\hat \omega =\sum_{x} \omega(x) \otimes |x \rangle \langle x |$, 
$\rho=\sum_{x} \omega(x)$ and 
$\rho'=\sum_{x} \mathrm{tr} (\omega(x)) |x \rangle \langle x |$. It remains to 
show that the left-hand side of this inequality is equal to $I(T(\omega))$ when 
$\cal T$ describes independent classical and quantum subsystems. 

It results from the definition of $\hat T$ that 
$\hat T (\hat \omega)=\sum_{x} T(\omega)(x) \otimes |x \rangle \langle x |$ 
and $\hat T (\rho \otimes \rho')
=\sum_{x} T(\rho f)(x) \otimes |x \rangle \langle x |$ with 
$f : x \mapsto \mathrm{tr} \omega(x)$. Since $T$ describes independent 
subsystems, this last state can be rewritten as $\hat T (\rho \otimes \rho')
=T_q(\rho) \otimes \sum_{x} T_c(f)(x) |x \rangle \langle x |$ where $T_q$ is a 
map from the set of states of $q_h$ to itself and $T_c$ is a map from the set of 
probability density functions of $c_h$ to itself. As $c_h$ is discrete, any hybrid 
state $\omega$ can be written as $\omega=\sum_{x'} f(x') \eta(x') \epsilon_{x'}$ 
where the sum converges in norm $\Vert \cdot \Vert_{\Sigma}$, 
$\eta(x')=\omega(x')/f(x')$ when $f(x')>0$ and is equal to any given density 
operator otherwise and $\epsilon_{x'} : x \mapsto \delta_{x'}(\{ x \})$. As $T$ is 
a bounded linear map, one has 
$T(\omega)=\sum_{x} f(x) T_q(\eta(x)) T_c(\epsilon_{x})$ where the sum 
converges in norm $\Vert \cdot \Vert_{\Sigma}$ and so 
$\mathrm{tr} (T(\omega)(x))=\sum_{x'} f(x') T_c(\epsilon_{x'})(x)$ for any $x$. 
For any $| \psi \rangle \in {\cal H}$, one has 
$\langle \psi |\sum_x T(\omega) (x) | \psi \rangle
=\langle \psi | \sum_{x} f(x) T_q(\eta(x))| \psi \rangle$ as the two sums converge 
in trace norm and sums with non-negative terms can be interchanged. The map 
$T_q$ can be written as $T_q : \rho \mapsto \sum_x T(\rho g)(x)$ where $g$ is 
any positive map on $X$ such that $\sum_x g(x)=1$. It can be extended to the set 
of trace-class operators on $\cal H$. As $T$ is a bounded linear map with respect 
to the norm $\Vert \cdot \Vert_{\Sigma}$, $T_q$ is a bounded linear map with 
respect to the trace norm and so 
$T_q(\rho)=\sum_{x} f(x) T_q(\eta(x))=\sum_x T(\omega) (x)$. The map $T_c$ is 
given by $T_c(f) : x \mapsto \mathrm{tr} T(\tau f)(x)$ where $\tau$ is any density 
operator on $\cal H$. It can be extended to a bounded linear map on the set of all 
complex maps $z$ on $X$ with finite $| z |_{\Sigma}=\sum_{x} | z(x) |$ equipped 
with the norm $| \cdot |_{\Sigma}$ and so 
$T_c(f)(x)=\sum_{x'} f(x') T_c(\epsilon_{x'})(x)
=\mathrm{tr} (T(\omega)(x))$ for any $x$, which finishes to prove that 
$S(\hat T (\hat \omega) \Vert \hat T(\rho \otimes \rho')) 
= I(T(\omega))$ and hence $I(T(\omega)) \le I(\omega)$
\end{proof}
\end{appendix}

\end{document}